\documentclass[journal,twoside,web]{ieeecolor}
\usepackage{geometry}
\usepackage{generic}
\usepackage{cite}
\usepackage{amsmath,amssymb,amsfonts}
\usepackage{algorithmic}
\usepackage{graphicx}
\usepackage{textcomp}

\usepackage{graphicx}
\usepackage{subfigure}
\usepackage{epsfig} 
\usepackage{cite}

\usepackage{lcsys}

\newtheorem{theorem}{Theorem}[section]

\newtheorem{proposition}[theorem]{Proposition}

\newtheorem{definition}[theorem]{Definition}

\newtheorem{assumption}[theorem]{Assumption}
\newtheorem{example}[theorem]{Example}
\newenvironment{proofOf}[1]{%
  \par\noindent\textit{Proof of #1.}\quad\ignorespaces
}{%
  \hfill$\blacksquare$\par
}

\allowdisplaybreaks

\begin{document}

\title{Learning Empirical Evidence Equilibria under Weak Environmental Coupling}

\author{\makebox[\textwidth][c]{Aya~Hamed,~Jason~R.~Marden,~Jeff~S.~Shamma}}

\maketitle

\begin{abstract}
Strategic multi-agent systems are fundamentally characterized by decentralization, uncertainty, and ambiguity. Agents operating under limited observations will often need to make decisions based on simplified internal models of the environment, reflecting bounded rationality in both computational capacity and environmental knowledge. The Empirical Evidence Equilibrium (EEE) framework explicitly accounts for these limitations by modeling each agent as forming a potentially misspecified belief derived from signals obtained through partial observations of the environment. The resulting equilibrium concept captures the system's steady state under bounded rationality and decentralization. In this work, we study games in which the environment dynamics are driven jointly by exogenous factors and agents' actions. We analyze agent behavior under Q-value iteration where each agent independently forms a belief model, computes Q-values, and derives a greedy strategy, yet the collective actions of all agents jointly shape the environment each agent faces at the next stage. We prove that despite this decentralization, an EEE emerges from the joint dynamics when the coupling between agents' actions and the environment is sufficiently weak. We further extend this result to softmax policies, establishing a contraction result under a sufficient coupling condition.
\end{abstract}

\IEEEpeerreviewmaketitle

\thispagestyle{empty}
\section{Introduction}
\IEEEPARstart{F}{rom} the complex dynamics of human societies and economic markets to the coordination of autonomous vehicles and robotic swarms, multi-agent systems are inherently large-scale, decentralized, and uncertain. Agents therein are often strategic and self-interested, interacting repeatedly with both the environment and each other, with outcomes shaped by collective behavior and the evolving system state. Stochastic games provide a powerful formalism for such settings, yet agents typically operate under limited observations and uncertain knowledge of both the environment dynamics and others' strategies. This leads to decision-making under partial observability, rendering each agent's problem a Partially Observable Markov Decision Process (POMDP).
\begingroup
\renewcommand{\thefootnote}{}
\footnote{This work is supported by AFOSR grants \#FA9550-25-1-0245 and \#FA9550-21-1-0203, and NASA grant \#103215.}
\footnote{Aya Hamed and Jason R. Marden are with the Department of Electrical and Computer Engineering, University of California, Santa Barbara, CA, ayah@ucsb.edu, jrmarden@ece.ucsb.edu. Jeff S. Shamma is with the Department of Industrial and Enterprise Systems Engineering, the Grainger College of Engineering, University of Illinois at Urbana-Champaign, IL, jshamma@illinois.edu}
\endgroup

Partial observability and stochasticity are inherent characteristics of a wide range of engineering applications. However, POMDPs are well known for their computational intractability. In addition, they rely on the assumption that agents possess a perfect model of both the environment and their opponents, an assumption that rarely holds in practice. Motivated by these modeling limitations, Dudebout and Shamma introduced a new framework, the {Empirical Evidence Equilibrium (EEE)} \cite{r39,r40}, that explicitly accounts for agents’ bounded rationality in terms of the agents' capabilities as well as their knowledge of the environment, leading to a more realistic model of agent behavior and equilibrium concept. 

Several related frameworks relax full-rationality assumptions through simplified modeling, though they differ in their assumptions and applicable settings. Mean Field Games \cite{LL2007, HMC2006} provide a scalable framework for large populations of interacting agents by replacing strategic interactions with a representative mean-field signal. The Berk–Nash Equilibrium \cite{EP2016} allows agents to hold misspecified subjective models, where equilibrium beliefs are the best-fitting distribution within each agent's model class given observed outcomes. Subjective Equilibrium under Beliefs of Exogenous Uncertainty \cite{AY2023} models games where agents incorrectly perceive endogenous environment variables as exogenous.  The Empirical Evidence Equilibrium framework extends this landscape by accommodating agents with varying levels of rationality who receive partial, noisy signals from the environment and form simplified belief models accordingly. Equilibrium is attained when each agent's strategy is optimal with respect to their belief model, and their model is consistent with the empirical distribution of observed signals. We adopt this framework in the present work. 

The setting of interest in this paper is stochastic games in which the environment dynamics are governed by both exogenous factors and agents' actions, with the latter having a bounded effect on the dynamics. Settings of partially controllable environments have been studied in the MDP literature, motivated by a range of real-world applications. The idea of decomposable MDPs was formalized in \cite{NS2019}, where state variables are partitioned into an endogenously controlled component and an exogenously evolving one. This structure is motivated by applications such as capacity portfolio management, where a firm's inventory level is an endogenous state while commodity prices evolve exogenously, and fishery management, where the fish population is jointly shaped by environmental conditions and the harvesting decisions of the manager. The study of MDPs with exogenous factors in the context of Reinforcement Learning (RL) is further motivated in \cite{DTC2018, TD2023}, noting that in settings such as wireless cellular networks, system performance is determined by both endogenous variables (e.g., cell tower parameters) and exogenous ones (e.g., customer demand driven by the number and spatial distribution of users). These works identify the slowing effect of such exogenous variation on RL algorithms and propose methods to mitigate it. Most recently, \cite{MSR2026}, motivated by the dynamics of real-world problems such as trading and reservoir management, similarly explores the performance of RL in MDPs where some state variables are endogenously controlled while the rest evolve exogenously, exploiting this structure to yield improved learning guarantees.

The present work addresses a related setting in stochastic games; rather than partitioning state variables into endogenous and exogenous components, we allow agents' actions to influence the transition dynamics globally, but only as a bounded deviation from a baseline uncoupled kernel, a structure motivated by the same class of real-world applications. Following the framework of bounded rationality and decision-making under local information, we study independent learning dynamics, specifically, Q-value iteration, in which each agent forms an internal model, computes Q-values, and derives an optimal strategy independently, yet the collective actions of all agents jointly shape the environment each agent faces at the next stage. We prove the emergence of an EEE through these dynamics, and establish convergence guarantees when the environment is weakly coupled; that is, when the transition kernel is sufficiently close to the uncoupled baseline.
\section{The Stochastic Game Framework}
We adopt the stochastic games framework of \cite{r39}, defining the following setup. The agents' actions affect the environment dynamics and generate agent-specific signals. Each agent forms a simplified model of these signals and receives a stage payoff depending on their state, action, and observed signal.
\label{subsec:NaturalSetup}

Let $\mathcal{I}$ be a finite set of agents. Each agent $i \in \mathcal{I}$ has a finite state space $\mathcal{X}_i$, a finite action space $\mathcal{A}_i$, and observes signals from a finite set $\mathcal{S}_i$. The environment itself has a finite state space $\mathcal{W}$. The joint state and action profiles of all agents are denoted by ${x} = (x_i)_{i\in \mathcal{I}}$ and ${a} = (a_i)_{i\in \mathcal{I}}$, respectively. 

The environment and agents' state dynamics, denoted by $\mathbf{N}$, evolve as follows:
\[
w^+ \sim \Phi(w,a),
\qquad
s_i \sim M_i(w),
\qquad
x_i^+ \sim \varphi_i(x_i,s_i,a_i),
\]
where $\sim$ denotes drawing from the corresponding distribution. For compactness, we will henceforth write $\Phi_{w,w^+}(a) \triangleq \mathbb{P}(w^+ \mid w, a)$, viewing $\Phi(a)$ as a stochastic matrix indexed by $(w, w^+)$, with $a$ as its argument.
Each agent receives a stage payoff defined by the function 
$g_i : \mathcal{X}_i \times \mathcal{A}_i \times \mathcal{S}_i \to \mathbb{R}$ and has a discount factor of $\delta_i.$

However, each agent only observes the signals $\mathcal{S}_i$ from the environment and is unaware of how they are generated and instead forms a misspecified model $\mu_i,$ described below, upon which it bases its decisions. An agent $i$ maintains a memory state $z_i$ in a finite set $\mathcal{Z}_i$, the structure of which may vary in sophistication. This memory is updated through a local function $l_i$, which takes as input the current memory state and signal.
An agent's presumed dynamics $\mathbf{M}_i$ is described as:
\[
s_i \sim \mu_i(z_i,x_i), \qquad
z_i^+ = l_i(z_i,s_i), \qquad
x_i^+ \sim \varphi_i(x_i,s_i,a_i).
\]
An agent chooses actions via a stationary policy 
$a_i \sim \sigma_i(z_i,x_i).$
The EEE equilibrium notion was defined in \cite{r39} to reflect the bounded rationality of agents and their limited observations of the environment. An equilibrium is achieved when agents act optimally given their formed models and presumed dynamics $\mathbf{M}_i$ and, at the same time, these models are consistent with the long-run observations induced by these optimal strategies and the true environment dynamics $\mathbf{N}$. We recall the 
EEE definition from \cite{r39} as applied to our framework, under the following regularity condition.

\begin{assumption}
     The Markov chain induced by $\mathbf{N}$ and any strategy ${\sigma}$ is ergodic and admits a unique stationary distribution.
    \label{StochasticAssumption}
\end{assumption}
\begin{definition}
Consider a set of agents $\mathcal{I}$, each with strategy $\sigma_i$ and model $\mu_i$. The pair of profiles $({\sigma}, {\mu})$ is an \textbf{EEE} if, for every agent $i \in \mathcal{I}$, the following conditions hold:  

 \noindent\textbf{Optimality:} The strategy $\sigma_i$ is greedy with respect to the agent's Q-value function. That is, for any state $(z_i, x_i)$: \[ \sigma_i(z_i, x_i) \in \arg\max_{a'_i \in \mathcal{A}_i} Q_i(z_i, x_i, a'_i) \] where the Q-value and V-value functions are defined by the Bellman equations under agent $i$'s own dynamical model $\mathbf{M}_i,$ with $\mu_i$ as the signal generation model: \begin{align*} Q_i(z_i, x_i, a_i) &= \mathbb{E}_{\mathbf{M}_i} \left[ g_i(x_i, a_i, s_i) + \delta_i V_i(z_i^+, x_i^+) \right], \\ V_i(z_i, x_i) &= \max_{a'_i \in \mathcal{A}_i} Q_i(z_i, x_i, a'_i). \end{align*}
 
 \noindent\textbf{Consistency:} The agent's model $\mu_i$ is consistent with the \emph{long-run signal frequencies} generated by $\mathbf{N}$ when all agents follow the joint strategy ${\sigma}$. That is:
\[\mu_i(z_i, x_i)[s_i] = \lim_{h \to \infty} \mathbb{P}_{\mathbf{N}, \sigma}(s_i^h=s_i \mid
z_i^h = z_i,\ x_i^h = x_i),\] where the limit is well-defined under Assumption~\ref{StochasticAssumption} and equals the conditional frequency of signal $s_i$ under the stationary distribution of the joint state process $(w, z_i, x_i)$ induced by $\mathbf{N}$ and $\sigma$. In practice, this can be approximated by the empirical signal frequencies observed over a long horizon.
 
 \end{definition}

In this work, we investigate whether boundedly rational agents operating under the EEE framework are able to reach an EEE through independent distributed dynamics. In \cite{thesis}, finite static games were shown to be a special case of the EEE framework, with Nash equilibria (NE) being equivalent to EEE in that setting. Yet even in this simplified case, establishing dynamics that converge to NE in full generality is unresolved. To address this, structure is often imposed on these games to guarantee convergence. This motivates our focus on the study of agents' behavior in structured stochastic games. In particular, we consider \textit{weakly-coupled} stochastic games, a model that, as noted earlier, captures many systems of practical interest that are governed by both external factors and agents' actions, where the influence of agents' actions' is bounded. We formally define a stochastic game with a {weak coupling} as follows:

\begin{definition}
\label{weak}
A stochastic game is \textbf{weakly-coupled} with coupling value $\lambda > 0$ if there exist action-independent kernels $\Phi_U$ and $\varphi_{U,i}$, defined by
\[
(\Phi_U)_{w,w^+} \triangleq \mathbb{P}(w^+ \mid w), \ \  \varphi_{U,i}( x_i,s_i) \triangleq \mathbb{P}(x_i^+\mid x_i,s_i),
\]
such that the action-dependent kernels $\Phi(a)$ and $\varphi_i( x_i,s_i,a_i)$ satisfy
\[
\lVert \Phi(a) - \Phi_U \rVert_{r,\infty} \le \varepsilon_\Phi, \qquad \forall a \in \mathcal{A},
\]
\[
\lVert \varphi_i( x_i,s_i,a_i) - \varphi_{U,i}( x_i,s_i) \rVert_{r,\infty} \le \varepsilon_\varphi, \ \  \forall a_i \in \mathcal{A}_i,\ i \in \mathcal{I},
\]
where $\|\cdot\|_{r,\infty}$ denotes the row-sum norm. 
Here, $\Phi(a)$ and $\Phi_U$ are treated as matrices 
row-indexed by $w$ and column-indexed by $w^+$, for each 
fixed $a$, while 
$\varphi_i(x_i, s_i, a_i)$ and $\varphi_{U,i}(x_i, s_i)$ 
are treated as matrices row-indexed by $(x_i, s_i)$ and 
column-indexed by $x_i^+$, for each fixed $a_i \in \mathcal{A}_i$. The overall coupling value is defined as
\[
\lambda = \varepsilon_\Phi +|\mathcal{I}|\varepsilon_\varphi.
\]
\end{definition}
\section{Q-Value Iteration Dynamics}

We consider Q-value iteration, defined below, as a natural model of independent, distributed learning dynamics. We prove that agents following these dynamics converge to an EEE in weakly-coupled stochastic games with a sufficiently small coupling value $\lambda$. We present the dynamics and the corresponding convergence results in what follows.

 The dynamics proceed as follows. For each agent, initialize $Q^{(0)}_i$ arbitrarily such that $\lVert Q^{(0)}_i\rVert_\infty\leq \frac{\lVert g_i\rVert_\infty}{1-\delta_i},$ where $\|\cdot\|_\infty := \sup_{z_i,x_i,a_i}|\cdot|$
denotes the sup-norm over all arguments of the function, and iterate as follows:

    \noindent 1) \textbf{Greedy policy update:}  
    \[
    \sigma^{(t)}_i(z_i,x_i)\in\arg\max_{a_i\in\mathcal A_i} Q^{(t)}_i(z_i,x_i,a_i).
    \]
    
    \noindent 2) \textbf{Consistent model update:}  
   \begin{multline*}
\mu^{(t)}_i(z_i, x_i)[s_i] = \lim_{h \to \infty} \mathbb{P}_{\mathbf{N}, \sigma^{(t)}}(s_i^h=s_i \mid \\
z_i^h = z_i,\ x_i^h = x_i)
\end{multline*}

    \noindent 3) \textbf{Q-update:}
\begin{multline*}
Q_i^{(t+1)}(z_i,x_i,a_i) = \sum_{s_i}\mu^{(t)}_i(z_i,x_i)[s_i]\,\Big[g_i(x_i,a_i,s_i)\\
+\delta_i\,\mathbb{E}_{x_i^+\sim\varphi_i}V^{(t)}_i(l_i(z_i,s_i),x_i^+)\Big]
\end{multline*} 
\noindent where $V^{(t)}_i(z_i,x_i)=\max_{a_i}Q^{(t)}_i(z_i,x_i,a_i)$. The fixed points of these dynamics, yielding $(Q^*, \mu^*, \sigma^*)$, 
are the empirical evidence equilibria by definition.

Note that when the environment and the agents' state dynamics are independent of the agents' 
actions, i.e.\ $w^+\sim \Phi_U(w)$ and $x_i^+\sim \varphi_{U,i}( x_i,s_i)$, the system becomes completely 
uncoupled and the stationary distribution of signals is independent of the collective strategy ${\sigma}$. The consistent model $\mu_i,$ therefore, corresponds to a single fixed distribution and learning through Q-value iteration then achieves convergence. However, when the state transition dynamics are action-dependent, the consistent model varies with the policy at each iteration through the policy's effect on the environment transition kernel. Thus, convergence of the dynamics is no longer guaranteed. Nevertheless, we show that under sufficiently weak coupling, convergence to an EEE is achievable. The following theorems establish this under both greedy and softmax policy updates.

\begin{definition}
For two joint strategies $\sigma, \bar\sigma$, we define the max-metric
\[
\|\sigma-\bar\sigma\|_\infty \;:=\; \max_{i}\; 
\max_{z_i,x_i,a_i}\;  
\big|\sigma_i(z_i,x_i)[a_i]-\bar\sigma_i(z_i,x_i)[a_i]\big|.
\]
For two joint Q-value functions $Q, \bar Q$, we define analogously
\[
\|Q-\bar Q\|_\infty \;:=\; \max_{i}\; 
\max_{z_i,x_i,a_i}\;   
\big|Q_i(z_i,x_i,a_i)-\bar Q_i(z_i,x_i,a_i)\big|.
\]

\end{definition}
We adopt the following assumptions as regularity conditions that ensure the continuity of  stationary distributions.

\begin{assumption}
\label{ass:regularity}
The following regularity conditions hold for any two joint strategies 
$\sigma$ and $\bar{\sigma}$:

    \noindent \textbf{Meyer sensitivity:} There exists a constant $\kappa > 0$ 
    such that $\|\pi - \bar\pi\|_\infty \le \kappa\,\|T - \bar{T}\|_{r,\infty}$, 
    where $T$ is the joint transition matrix over $(w,z,x)$ under 
    $\{l_i, \varphi_i, M_i\}_{i\in\mathcal{I}}$, $\Phi(a)$, and $\sigma$; 
    $\pi$ is the stationary distribution induced by $T$; and $\bar{T}$, 
    $\bar{\pi}$ are the corresponding quantities under $\bar{\sigma}$.
    
    \noindent \textbf{Minimal mass:} For each $i \in \mathcal{I}$, there exists 
    a constant $m_i > 0$ such that $\pi(z_i,x_i),\,\bar{\pi}(z_i,x_i) 
    \ge m_i$ for all $(z_i,x_i)$.
\end{assumption}

\begin{proposition}
\label{prop:QvalueStability}
Let agent $i$ have a bounded stage reward function, such that $|g_i(x_i,a_i,s_i)| \le G_i < \infty,$ and let $\mu_i^{(t)}$ and $\bar \mu_i^{(t)}$ be two sequences of probability models over signals $s_i$. Define

\[\varepsilon^{(t)}_{i,\mu} :=  \max_{z_i,x_i}\big\lVert \mu^{(t)}_i(z_i,x_i)-\bar\mu^{(t)}_{i}(z_i,x_i)\big\rVert_\infty, \  \varepsilon_{i,\mu} :=\max_t \varepsilon^{(t)}_{i,\mu}.\]
Let $Q_i^{(t)}$ and $\bar Q_i^{(t)}$ be the Q-value iterates under these potentially distinct models:
\begin{align*}
&Q_i^{(t+1)}(z_i, x_i, a_i) \\
&\quad= {\mathbb{E}}_{s_i\sim\mu^{(t)}_i,\ x_i^+\sim\varphi_i} \Big[ g_i(x_i, a_i, s_i) + \delta_i V_i^{(t)}(l_i(z_i,s_i), x_i^+) \Big], \\[4pt]
&\bar{Q}_i^{(t+1)}(z_i, x_i, a_i) \\
&\quad= {\mathbb{E}}_{s_i\sim\bar{\mu}^{(t)}_i,\ x_i^+\sim\varphi_i} \Big[ g_i(x_i, a_i, s_i) + \delta_i \bar{V}_i^{(t)}(l_i(z_i,s_i), x_i^+) \Big].
\end{align*}
where $V_i^{(t)}(z_i,x_i) = \max_{a_i} Q_i^{(t)}(z_i,x_i,a_i)$ and $\bar V_i^{(t)}(z_i,x_i) = \max_{a_i} \bar {Q}_i^{(t)}(z_i,x_i,a_i).$ Then,
\[
\lim_{t\rightarrow\infty}\|Q_i^{(t)} - \bar {Q}_i^{(t)}\|_\infty \le \frac{\varepsilon_{i,\mu} \lvert \mathcal{S}_i\rvert G_i}{(1-\delta_i)^2},
\] where $\|\cdot\|_\infty$ is the sup-norm over all states and actions.
\end{proposition}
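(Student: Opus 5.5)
The approach is a perturbed contraction argument. Define the error $e^{(t)} := \|Q_i^{(t)} - \bar Q_i^{(t)}\|_\infty$. The goal is a one-step recursion of the form $e^{(t+1)} \le \delta_i e^{(t)} + c$, with a constant $c$ independent of $t$; iterating it and letting $t\to\infty$ then gives the bound.

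First I would establish a uniform a priori bound on the iterates. By induction, $\|Q_i^{(t)}\|_\infty \le \frac{G_i}{1-\delta_i}$ and likewise for $\bar Q_i^{(t)}$.
\begin{itemize}
\item The base case is the initialization condition $\|Q^{(0)}_i\|_\infty \le \|g_i\|_\infty/(1-\delta_i)$, with $\|g_i\|_\infty\le G_i$. If $\bar Q_i^{(0)}$ is initialized separately, I assume it satisfies the same condition.
\item For the step, each update is an average over $s_i$ and $x_i^+$ of $g_i + \delta_i V$. Hence its sup-norm is at most $G_i + \delta_i \frac{G_i}{1-\delta_i} = \frac{G_i}{1-\delta_i}$.
\end{itemize}
The same bound then holds for $V_i^{(t)}$ and $\bar V_i^{(t)}$.

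Next, fix $(z_i,x_i,a_i)$ and write $F^{(t)}(s_i) := g_i(x_i,a_i,s_i) + \delta_i \mathbb{E}_{x_i^+\sim\varphi_i}V_i^{(t)}(l_i(z_i,s_i),x_i^+)$, with $\bar F^{(t)}$ defined analogously using $\bar V_i^{(t)}$. I would split the difference by adding and subtracting $\sum_{s_i}\bar\mu_i^{(t)}[s_i]F^{(t)}(s_i)$:
\[
Q_i^{(t+1)}-\bar Q_i^{(t+1)} = \sum_{s_i}\big(\mu_i^{(t)}-\bar\mu_i^{(t)}\big)[s_i]\,F^{(t)}(s_i) + \sum_{s_i}\bar\mu_i^{(t)}[s_i]\big(F^{(t)}-\bar F^{(t)}\big)(s_i).
\]
\begin{itemize}
\item The first term is bounded by $|\mathcal S_i|\,\varepsilon_{i,\mu}\,\|F^{(t)}\|_\infty$. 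By the a priori bound, $\|F^{(t)}\|_\infty \le \frac{G_i}{1-\delta_i}$.
\item For the second term, $\bar\mu_i^{(t)}$ is a probability vector, and $|\max Q - \max\bar Q|\le \max|Q-\bar Q|$ gives $\|V_i^{(t)}-\bar V_i^{(t)}\|_\infty\le e^{(t)}$. Averaging over $x_i^+$ preserves this, so the second term is at most $\delta_i e^{(t)}$.
\end{itemize}
Combining the two terms and taking the sup over $(z_i,x_i,a_i)$ yields
\[
e^{(t+1)}\le \delta_i e^{(t)} + \frac{\varepsilon_{i,\mu}|\mathcal S_i| G_i}{1-\delta_i}.
\]

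Finally, unrolling the recursion gives $e^{(t)} \le \delta_i^t e^{(0)} + \frac{1-\delta_i^t}{1-\delta_i}\cdot\frac{\varepsilon_{i,\mu}|\mathcal S_i|G_i}{1-\delta_i}$. Here $e^{(0)}\le \frac{2G_i}{1-\delta_i}$ is finite, so the first term vanishes as $t\to\infty$, which gives the stated bound of $\frac{\varepsilon_{i,\mu}|\mathcal S_i|G_i}{(1-\delta_i)^2}$.

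The only real subtlety is that the limit of $e^{(t)}$ need not exist. The bound should therefore be read as a statement about $\limsup$, which the argument above delivers. A second point is that the choice of splitting matters. Pairing the model difference with $F^{(t)}$, which is bounded via the a priori step, and pairing the value difference with a probability weight is what keeps the coefficient on $e^{(t)}$ equal to $\delta_i$ rather than something larger.
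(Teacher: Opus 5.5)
Your proposal is correct and matches the paper's proof essentially step for step. It uses the same inductive a priori bound $\|V_i^{(t)}\|_\infty\le G_i/(1-\delta_i)$, the same cross term $\sum_{s_i}\bar\mu_i^{(t)}[s_i]F^{(t)}(s_i)$, the same recursion $e^{(t+1)}\le\delta_i e^{(t)}+\varepsilon_{i,\mu}|\mathcal{S}_i|G_i/(1-\delta_i)$, and the same unrolling. Your remark that the conclusion should be read as a $\limsup$ bound is a valid minor sharpening, since the paper writes $\lim$ without showing that the limit exists.
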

\begin{theorem}

\label{ConvTheorem}
Suppose the stochastic game is weakly-coupled with value $\lambda$, and 
Assumptions~\ref{StochasticAssumption} and~\ref{ass:regularity} hold. 
Suppose further that an EEE $(\sigma^*,\mu^*)$ exists, and that $\sigma^*$ 
has a margin
\begin{align*}
\xi_i &= \min_{z_i,x_i}\Big(Q_i^*(z_i,x_i,\sigma^*_i(z_i,x_i)) - 
\max_{a''_i\neq\sigma^*_i(z_i,x_i)}Q_i^*(z_i,x_i,a_i'')\Big)\\
&> 0.
\end{align*}
for all $i \in \mathcal{I}$, where $Q^*$ is the fixed point of the Bellman equation under model $\mu^*.$ Then the Q-value iteration converges to $Q^*,$ and hence the model and policies converge to $(\sigma^*,\mu^*)$ for sufficiently small $\lambda$.
\end{theorem}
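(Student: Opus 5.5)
The plan is to compare the actual Q-value iteration with the ``frozen'' iteration that uses the equilibrium model $\mu^*$ at every step, whose iterates $\bar Q_i^{(t)}$ converge to $Q^*$ by the standard $\delta_i$-contraction of the Bellman operator. Proposition~\ref{prop:QvalueStability} then controls $\|Q^{(t)}-\bar Q^{(t)}\|_\infty$ in terms of $\varepsilon_{i,\mu}$. It therefore suffices to show that $\varepsilon_{i,\mu}$ is $O(\lambda)$, uniformly over all strategy profiles. Once this holds, $Q^{(t)}$ eventually enters a neighborhood of $Q^*$ of radius smaller than $\xi_i/2$. In that neighborhood the greedy policy coincides with $\sigma^*$, so $\mu^{(t)}=\mu^*$ exactly, and from then on the iteration is the frozen contraction converging to $Q^*$.

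\textbf{Step 1: uniform model perturbation bound.} Fix any joint strategy $\sigma$. Let $T_\sigma$ be the joint transition matrix over $(w,z,x)$, and let $T_U$ be the uncoupled one built from $\Phi_U,\varphi_{U,i}$. Both matrices are products of the factor kernels $\Phi$, $M_i$, $l_i$, $\varphi_i$, averaged over $a\sim\sigma$. A telescoping (hybrid) argument that swaps one factor at a time gives
\[
\|T_\sigma - T_U\|_{r,\infty}\le \varepsilon_\Phi + |\mathcal I|\varepsilon_\varphi=\lambda ,
\]
because every factor is row-stochastic, so each swap costs at most that factor's row-sum deviation. Applying the Meyer sensitivity bound to $\sigma$ and $\sigma^*$ via the triangle inequality through $T_U$ gives $\|T_\sigma-T_{\sigma^*}\|_{r,\infty}\le 2\lambda$, and hence $\|\pi_\sigma-\pi_{\sigma^*}\|_\infty\le 2\kappa\lambda$.

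If Assumption~\ref{ass:regularity} is only stated for policy-induced pairs, I would apply it directly to the pair $(\sigma,\sigma^*)$. The conditional signal frequencies are $\mu(z_i,x_i)[s_i]=\pi(s_i,z_i,x_i)/\pi(z_i,x_i)$, where the marginals are finite sums of entries of $\pi$. By the minimal mass condition, a ratio-of-differences estimate then yields
\[
\|\mu_\sigma(z_i,x_i)-\mu^*(z_i,x_i)\|_\infty\le C_i\lambda, \qquad C_i = \frac{4\kappa K}{m_i},
\]
where $K$ counts the states summed in the marginalization. This bound holds for every $t$, so $\varepsilon_{i,\mu}\le C_i\lambda$.

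\textbf{Step 2: entering the margin region.} By Proposition~\ref{prop:QvalueStability} and the convergence $\bar Q^{(t)}\to Q^*$,
\[
\limsup_t\|Q_i^{(t)}-Q_i^*\|_\infty\le \frac{C_i\lambda|\mathcal S_i|G_i}{(1-\delta_i)^2}.
\]
Choose $\lambda$ small enough that this bound is strictly less than $\xi_i/2$ for all $i$. Then there is a finite $t_0$ with $\|Q_i^{(t_0)}-Q_i^*\|_\infty<\xi_i/2$. At such a $t_0$, the margin condition gives
\[
Q_i^{(t_0)}(z_i,x_i,\sigma_i^*)>Q_i^*(z_i,x_i,\sigma_i^*)-\xi_i/2\ge \max_{a''\neq\sigma^*_i} Q_i^*(z_i,x_i,a'')+\xi_i/2>\max_{a''\neq\sigma^*_i} Q_i^{(t_0)}(z_i,x_i,a''),
\]
so $\sigma^{(t_0)}=\sigma^*$ and $\mu^{(t_0)}=\mu^*$.

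\textbf{Step 3: invariance.} The next iterate is $Q^{(t_0+1)}=\mathcal B_{\mu^*}Q^{(t_0)}$, where $\mathcal B_{\mu^*}$ is a $\delta_i$-contraction with fixed point $Q^*$. Hence $\|Q^{(t_0+1)}-Q^*\|\le\delta_i\|Q^{(t_0)}-Q^*\|<\xi_i/2$, and by induction the iterates stay in the ball. Greedy policies remain $\sigma^*$, and $Q^{(t)}\to Q^*$ geometrically.

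\textbf{Main obstacle.} I expect Step 1 to be the hardest part. The coupling bound must be converted into a uniform, policy-independent bound on the conditional signal models, and this requires the telescoping decomposition of $T_\sigma$ into factors together with careful handling of the conditioning denominators through $m_i$.

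A secondary subtlety is how Proposition~\ref{prop:QvalueStability} enters Step 2. Its limit statement gives an asymptotic bound, and I only need a single finite time $t_0$ with $\|Q^{(t_0)}-Q^*\|<\xi_i/2$. I would secure this directly from its proof, using the recursion $e_{t+1}\le \delta_i e_t+\varepsilon|\mathcal S_i|G_i/(1-\delta_i)$.
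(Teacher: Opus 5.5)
Your proposal is correct and follows the paper's proof. Both compare against the $\mu^*$-driven Bellman iteration via Proposition~\ref{prop:QvalueStability} (the paper simply takes the constant sequence $\bar Q_i^{(t)}=Q_i^*$), use a policy-uniform $O(\lambda)$ model bound to enter the $\xi_i/2$-ball, lock the greedy policy to $\sigma^*$ so that $\mu^{(t)}=\mu^*$, and finish with the $\delta_i$-contraction. The only real difference is in Step 1: you route through $T_U$ by the triangle inequality to get $\|T_\sigma-T_{\sigma^*}\|_{r,\infty}\le 2\lambda$, whereas the paper's Proposition~\ref{prop:EnvToModel} uses $\sum_a(\sigma-\bar\sigma)[a]=0$ to get a bound proportional to $\hat{\mathcal A}\|\sigma-\bar\sigma\|_\infty\lambda$; that Lipschitz-in-$\sigma$ form is only genuinely needed for the softmax contraction theorem, so your simpler uniform bound suffices here.
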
 

Proposition~\ref{prop:QvalueStability} bounds the Q-value difference under bounded model perturbation, in the spirit of approximate and robust dynamic programming as in \cite{Bertsekas2012} and \cite{RoyXUPokutta2017}. Theorem~\ref{ConvTheorem} then uses this bound along with a bound on model perturbation in weakly-coupled stochastic games, governed by the coupling value, to show that the Q-values converge to a neighborhood determined by the coupling strength and game parameters, and further converge to a fixed point if the optimal action is preserved within this neighborhood.

 We further consider a relaxed optimality condition by adopting a softmax policy, and establish a sufficient condition for the convergence of the Q-value iteration. The policy update governed by a softmax policy is:
\[
\sigma^{(t)}_i(z_i,x_i)[a_i] = \text{softmax}_{\tau_i}(Q^{(t)}_i(z_i,x_i,.))
\]
\[
=\frac{\exp(Q^{(t)}_i(z_i,x_i,a_i)/\tau_i)}{\sum_{a'_i \in \mathcal{A}_i} \exp(Q^{(t)}_i(z_i,x_i,a'_i)/\tau_i)},
\]where $\tau_i > 0$
is agent
$i$'s softmax temperature parameter.
The fixed points of these dynamics are approximate variants of the EEE, defined as follows.
\begin{definition}
The pair $({\sigma}, {\mu})$
 is an \textbf{Approximate EEE} if it satisfies the same conditions as the EEE, except that the optimality condition is replaced by a softmax policy:
\[
\sigma_i(z_i,x_i)[a_i] = \frac{\exp(Q_i(z_i,x_i,a_i)/\tau_i)}{\sum_{a'_i \in \mathcal{A}_i} \exp(Q_i(z_i,x_i,a'_i)/\tau_i)}
\]
where $\tau_i > 0$
is agent
$i$'s softmax temperature parameter.
\end{definition}

\begin{theorem}
\label{prop:Contraction}
Under Assumptions~\ref{StochasticAssumption} and~\ref{ass:regularity}, 
the Q-value iteration operator with softmax updates is a contraction mapping 
on the space of Q-value functions, under the max-norm, for sufficiently small $\lambda$. 
Consequently, the agents' policies and models converge to an approximate EEE.
\end{theorem}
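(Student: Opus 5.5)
The plan is to write the full iteration as a composite map $F = \mathcal{B}\circ\mathcal{M}\circ\mathcal{S}$ on joint Q-value functions. Here $\mathcal{S}$ is the softmax policy map, $\mathcal{M}$ sends a joint policy to the consistent models $\mu_i$, and $\mathcal{B}$ is the Bellman-type update. Since $\mathcal{B}$ uses $V^{(t)}$, I will treat $F$ as acting on $Q$ through both the model and the continuation value. I then bound the Lipschitz constant of each piece in the max-metrics just defined and show the product is below one when $\lambda$ is small.

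\textbf{Step 1 (softmax).} For each agent and each $(z_i,x_i)$, the Jacobian of $\mathrm{softmax}_{\tau_i}$ is $\frac{1}{\tau_i}(\mathrm{diag}(p)-pp^\top)$. Its induced $\infty$-norm is at most $1/\tau_i$ (in fact $\le 1/(2\tau_i)$), so $\|\sigma-\bar\sigma\|_\infty \le \frac{1}{\tau_{\min}}\|Q-\bar Q\|_\infty$.

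\textbf{Step 2 (policy to model).} This is the heart of the argument, and I expect it to be the main obstacle. First I bound the transition-matrix change $\|T-\bar T\|_{r,\infty}$ in terms of $\lambda$ and $\|\sigma-\bar\sigma\|_\infty$. Writing $T$ as the product of $\Phi(a)$, the $M_i$, the deterministic $l_i$, $\varphi_i$, and $\sigma$ summed over $a$, the key observation is the following. In $\sum_a(\sigma(a)-\bar\sigma(a))\Phi(a)$ one may subtract $\Phi_U$, because $\sum_a(\sigma(a)-\bar\sigma(a))=0$. Hence this term equals $\sum_a(\sigma(a)-\bar\sigma(a))(\Phi(a)-\Phi_U)$, with row-sum norm at most $\varepsilon_\Phi\sum_a|\sigma(a)-\bar\sigma(a)|$. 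Doing the same for each $\varphi_i$ gives $\varepsilon_\varphi$ per agent, and a telescoping bound over the product of factors (all stochastic, with norm 1) yields
\begin{equation*}
\|T-\bar T\|_{r,\infty}\le C_{\mathcal A}\,\lambda\,\|\sigma-\bar\sigma\|_\infty ,
\end{equation*}
where $C_{\mathcal A}$ depends on the action-set sizes (e.g.\ $\max_i|\mathcal A_i|$ or $\prod_i|\mathcal A_i|$ from the joint-action sum; I will need care to bound joint-policy differences by individual ones). Meyer sensitivity then gives $\|\pi-\bar\pi\|_\infty\le\kappa C_{\mathcal A}\lambda\|\sigma-\bar\sigma\|_\infty$.

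Next I pass to conditionals. We have $\mu_i(z_i,x_i)[s_i]=\pi(s_i,z_i,x_i)/\pi(z_i,x_i)$, where both marginals are sums of entries of $\pi$ over finitely many states. The quotient bound $|a/b-\bar a/\bar b|\le (|a-\bar a|+|b-\bar b|)/\bar b$, combined with the minimal-mass assumption, gives $\varepsilon_{i,\mu}\le \frac{2|\mathcal W||\mathcal Z||\mathcal X|}{m_i}\|\pi-\bar\pi\|_\infty$, up to a cardinality constant $K_i$. Here $\pi(s_i,\cdot)$ is obtained through $M_i$ from $\pi$ over $w$.

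\textbf{Step 3 (Bellman step).} For a single step, the argument of Proposition~\ref{prop:QvalueStability} gives
\begin{equation*}
\|F(Q)_i-F(\bar Q)_i\|_\infty\le \delta_i\|Q_i-\bar Q_i\|_\infty+\varepsilon_{i,\mu}|\mathcal S_i|\Big(G_i+\tfrac{\delta_i G_i}{1-\delta_i}\Big).
\end{equation*}
This splits the difference into the continuation term, where the max is 1-Lipschitz, and the model term, where we use that both iterates stay in the ball $\|Q_i\|\le G_i/(1-\delta_i)$, which is preserved by the initialization. Combining the three steps,
\begin{equation*}
\|F(Q)-F(\bar Q)\|_\infty\le\Big(\max_i\delta_i+\lambda\,\max_i\frac{|\mathcal S_i|G_iK_i\kappa C_{\mathcal A}}{m_i(1-\delta_i)\tau_i}\Big)\|Q-\bar Q\|_\infty .
\end{equation*}
This is a contraction whenever $\lambda$ is below $(1-\max_i\delta_i)$ divided by the bracketed constant.

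\textbf{Conclusion.} The invariant ball is complete, so Banach's fixed-point theorem gives a unique fixed point $Q^*$ and geometric convergence of $Q^{(t)}$ to it. Continuity of the softmax map and of the model map (Steps 1 and 2) then gives $\sigma^{(t)}\to\sigma^*$ and $\mu^{(t)}\to\mu^*$. By construction, the fixed point satisfies the softmax optimality and consistency conditions, so it is an approximate EEE.
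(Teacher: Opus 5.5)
Your proposal is correct and follows the paper's proof essentially step for step. The chain is the same: softmax Lipschitz bound, then the policy-to-model bound of Proposition~\ref{prop:EnvToModel} (subtracting the uncoupled kernels, Meyer sensitivity, and the minimal-mass quotient bound), then the one-step recursion from Proposition~\ref{prop:QvalueStability}, giving a factor $\max_i\delta_i+O(\lambda)$. The only differences are cosmetic: your Jacobian argument gives the sharper softmax constant $1/(2\tau_i)$ in place of the paper's $\sqrt{|\mathcal{A}_i|}/\tau_i$, and you make the Banach step on the invariant ball explicit.

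Two bookkeeping slips should be fixed, though neither affects the conclusion for sufficiently small $\lambda$. First, the joint-policy constant should be $\hat{\mathcal{A}}=\sum_j|\mathcal{A}_j|$ from the telescoping bound; your option $\max_i|\mathcal{A}_i|$ is too small once $|\mathcal{I}|\ge 2$. Second, agent $i$'s model depends on every agent's policy, so the final factor should read $\frac{1}{\tau_{\min}}\max_i(\cdot)$, consistent with your own Step 1, rather than $\max_i(\cdot)/\tau_i$.
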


\section{Numerical Illustration}
To illustrate the EEE framework, we present a numerical example and 
compute the resulting EEE in this setting. We further apply the 
Q-value iteration dynamics to this example, demonstrating the effect 
of the coupling strength on the agents' behavior.

\begin{example} \label{Example}
Consider two agents in a stochastic game where the environment has $|\mathcal{W}|=4$ states, yet each agent only observes one of two individual signals, $|\mathcal{S}_1|=|\mathcal{S}_2|=2$. Each agent has two local states $|\mathcal{X}_1|=|\mathcal{X}_2|=2$, keeps in memory the last signal observed, $z^+=s$, forms a model, and chooses an action based on $(z_i,x_i)$. The environment dynamics kernel is a convex combination of an uncoupled kernel $\Phi_U$ and a coupled kernel $\Phi_C(a)$, such that $\Phi(a)=\alpha\Phi_C(a)+(1-\alpha)\Phi_U$ for all $a$. The agents state dynamics kernel is similarly a convex combination of a uniform uncoupled kernel $\varphi_U(x_i,s_i) = \frac{1}{|\mathcal{X}_i|}$ for all $x_i,s_i$, and a coupled kernel $\varphi_{C,i}(x_i,s_i,a_i)$ such that  $\varphi_{i}(x_i,s_i,a_i)=\alpha\varphi_{C,i}(x_i,s_i,a_i)+(1-\alpha)\varphi_U(x_i,s_i)$ for all $a_i$. We set $\alpha=0.9$ and $\delta_1=\delta_2=0.7$. The signal kernels, transition kernels and rewards are:
\[
M_1=\begin{pmatrix}0.98&0.02\\0.09&0.91\\0.79&0.21\\0.74&0.26\end{pmatrix},\quad
M_2=\begin{pmatrix}0.93&0.07\\0.82&0.18\\0.64&0.36\\0.11&0.89\end{pmatrix},
\]
\[\Phi_U=\begin{pmatrix}0.36&0.42&0.05&0.17\\0.06&0.42&0.33&0.19\\0.34&0.03&0.03&0.60\\0.39&0.29&0.24&0.08\end{pmatrix},\]
\[
\Phi_C(1,1)=\begin{pmatrix}0.29&0.09&0.15&0.47\\0.11&0.06&0.19&0.64\\0.25&0.29&0.21&0.25\\0.11&0.40&0.02&0.47\end{pmatrix},\]\[
\Phi_C(1,2)=\begin{pmatrix}0.06&0.51&0.20&0.23\\0.48&0.11&0.30&0.11\\0.31&0.39&0.22&0.08\\0.32&0.01&0.24&0.43\end{pmatrix},\]\[
\Phi_C(2,1)=\begin{pmatrix}0.39&0.17&0.20&0.24\\0.20&0.48&0.05&0.27\\0.09&0.48&0.30&0.13\\0.23&0.07&0.22&0.48\end{pmatrix},
\]
\[
\Phi_C(2,2)=\begin{pmatrix}0.22&0.27&0.26&0.25\\0.09&0.35&0.47&0.09\\0.38&0.27&0.22&0.13\\0.23&0.04&0.44&0.29\end{pmatrix},
\]

where $\Phi_C(a_1,a_2)$ denotes the coupled kernel under joint action $(a_1,a_2)$, with rows indexing $w$ and columns indexing $w^+$.
The local state transitions $\varphi_{C,i}(a_i)$, treated as 
matrices row-indexed by $(x_i,s_i)$ and column-indexed by 
$x_i^+$, for each $a_i$, are:
\[
\varphi_{C,1}(\cdot,\cdot,1)=\begin{pmatrix}
0.80&0.20\\
0.26&0.74\\
0.84&0.16\\
0.93&0.07
\end{pmatrix},\ 
\varphi_{C,1}(\cdot,\cdot,2)=\begin{pmatrix}
0.82&0.18\\
0.60&0.40\\
0.24&0.76\\
0.35&0.65
\end{pmatrix},
\]
\[
\varphi_{C,2}(\cdot,\cdot,1)=\begin{pmatrix}
0.34&0.66\\
0.62&0.38\\
0.64&0.36\\
0.62&0.38
\end{pmatrix},\ 
\varphi_{C,2}(\cdot,\cdot,2)=\begin{pmatrix}
0.17&0.83\\
0.61&0.39\\
0.37&0.63\\
0.48&0.52
\end{pmatrix}.
\]

 The stage rewards, identical across local states with rows indexing actions and columns indexing signals, are:
\[
g_1=\begin{pmatrix}-48&-50\\-36&1\end{pmatrix},\quad
g_2=\begin{pmatrix}31&-100\\-21&-37\end{pmatrix}.
\]
\textbf{EEE:} The pair $(\sigma^*,\mu^*)$ defined below is an EEE of this game. The optimal strategies are
\[
\sigma_1^*(z_1,x_1)=\begin{pmatrix}0&1\end{pmatrix},\ 
\sigma_2^*(z_2,x_2)=\begin{pmatrix}1&0\end{pmatrix},\  \forall\, z_1,z_2,x_1,x_2,
\]

and the consistent models $\mu_i^*(z_i,x_i)$ are
\[
\mu_1^*(z_1=1,x_1)\approx\begin{pmatrix}0.67&0.33\end{pmatrix},\quad \forall x_1
\]
\[
\mu_1^*(z_1=2,x_1)\approx\begin{pmatrix}0.54&0.46\end{pmatrix},\quad \forall x_1
\]
\[
\mu_2^*(z_2=1,x_2)\approx\begin{pmatrix}0.64&0.36\end{pmatrix},\quad \forall x_2
\]
\[
\mu_2^*(z_2=2,x_2)\approx\begin{pmatrix}0.55&0.45\end{pmatrix},\quad \forall x_2
\]
\end{example}

We now demonstrate the Q-value iteration dynamics on Example~\ref{Example} under two coupling values and the two policy update rules. In this example, $\alpha$ controls the degree of coupling, since
\[
\|\Phi(a) - \Phi_U\|_{r,\infty} = \alpha \|\Phi_C(a) - \Phi_U\|_{r,\infty},
\] 
\vspace{-5ex}
\begin{align*}
&\|\varphi_i(x_i,s_i,a_i) - \varphi_U(x_i,s_i)\|_{r,\infty} \nonumber\\ 
&\quad = \alpha \|\varphi_{C,i}(x_i,s_i,a_i) - \varphi_U(x_i,s_i)\|_{r,\infty},
\end{align*}
so that $\alpha=0$ recovers the uncoupled environment. The evolution of the agents' policies across iterations for both the greedy and softmax updates, under $\alpha=0.9$ and $\alpha=1$ is shown in Figures~\ref{fig:greedy55}--\ref{fig:soft100}.

Under the weaker coupling ($\alpha=0.9$), both the greedy and the 
softmax policies (Figures~\ref{fig:greedy55} and~\ref{fig:soft55}) converge. When $\alpha=1,$ the strategy of the second agent cycles (Figures~\ref{fig:greedy100} 
and~\ref{fig:soft100}), showing that strong coupling can disrupt convergence. In practice, 
however, convergence is commonly observed across a broad range of coupling values, suggesting 
that the Q-value iteration 
can converge beyond the proven regime.
\begin{figure}[h]
    \centering
        \centering
        \includegraphics[scale=0.33]{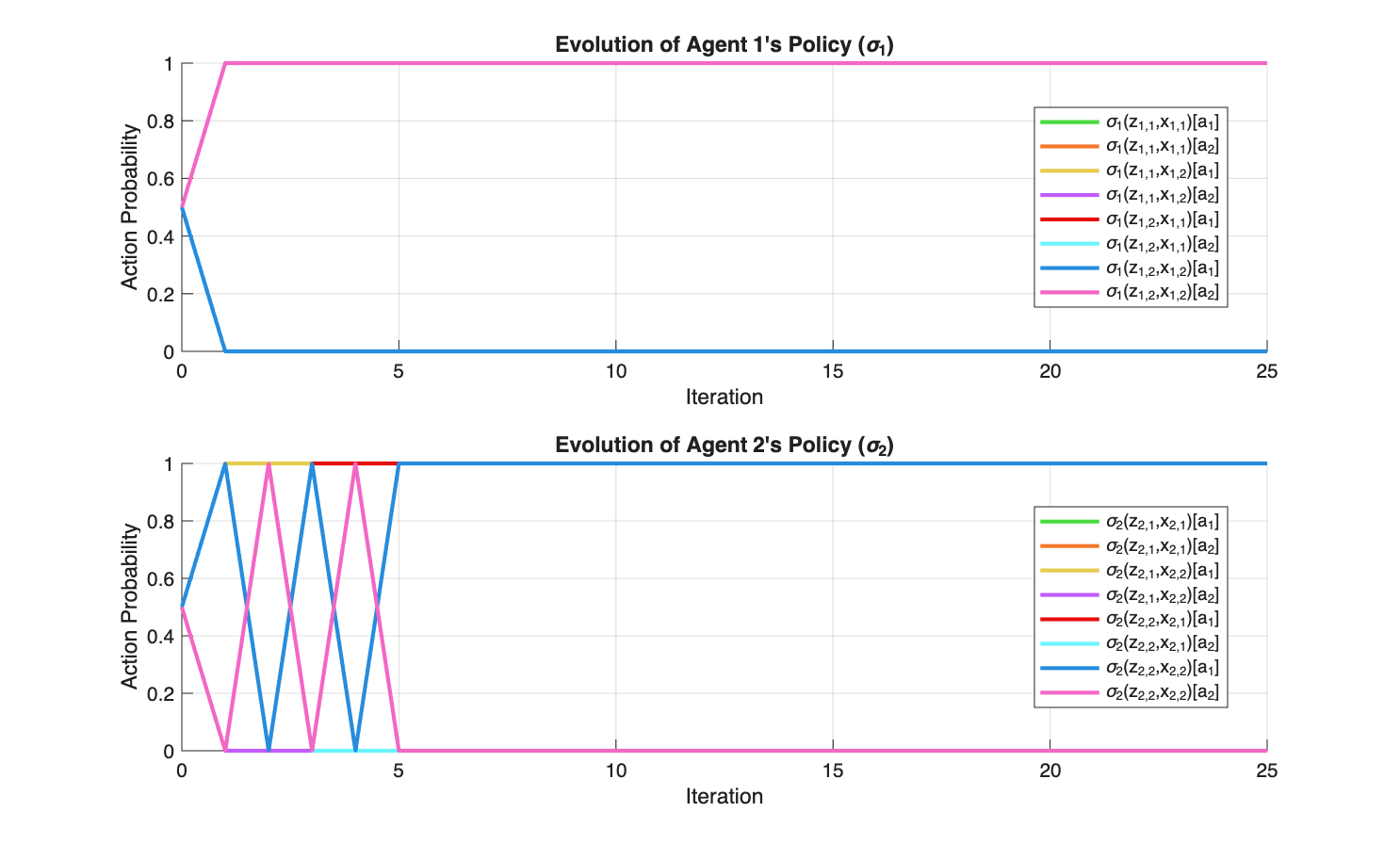}
        \caption{Greedy policy, $\alpha=0.9$}
        \label{fig:greedy55}
        \end{figure}
    \begin{figure}[h]
        \centering
        \includegraphics[scale=0.33]{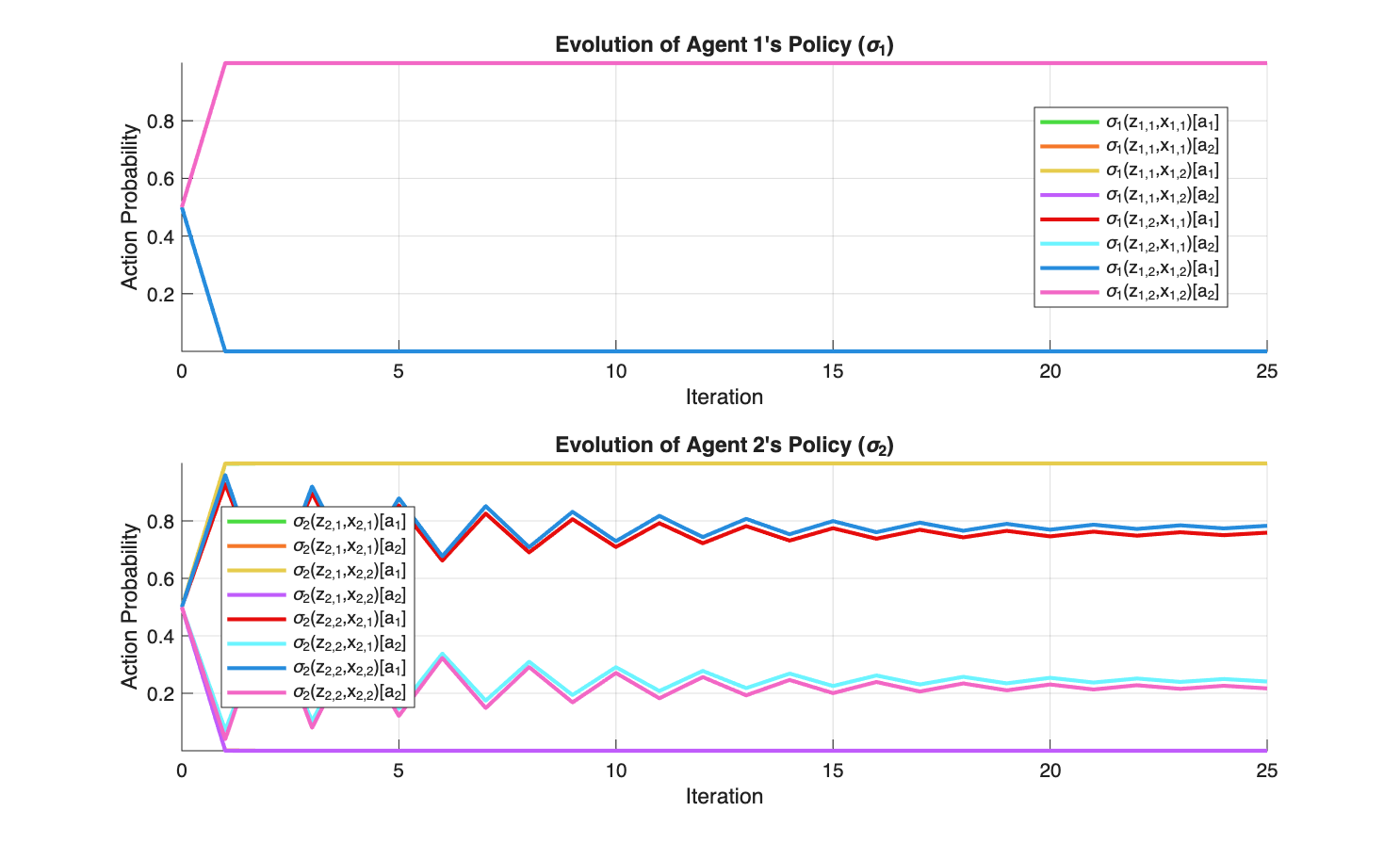}
        \caption{Softmax policy, $\tau_1=\tau_2=1,\,\alpha=0.9$}
        \label{fig:soft55}
    \end{figure}
    
    \vspace{1em}
    
    \begin{figure}
        \centering
        \includegraphics[scale=0.33]{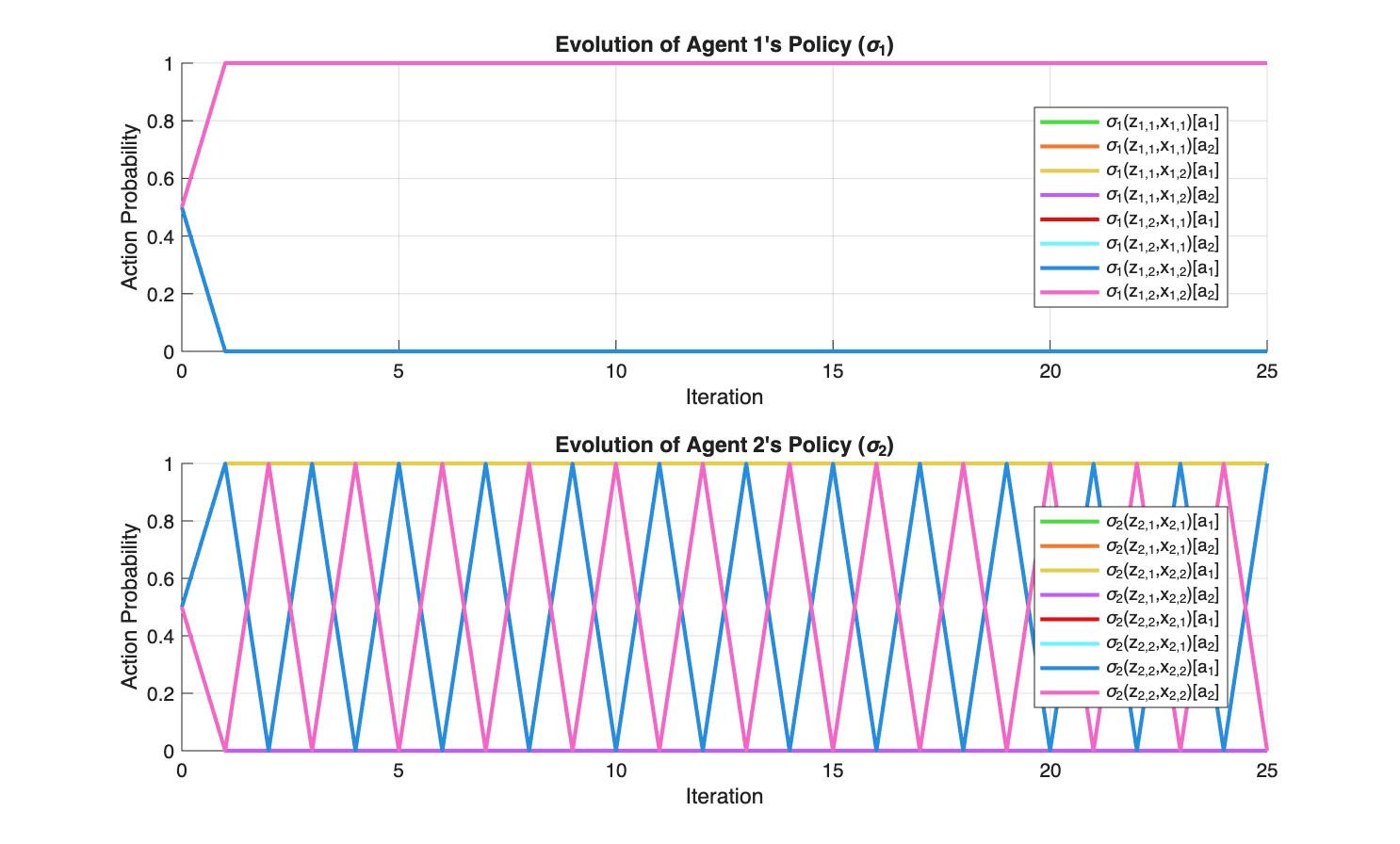}
        \caption{Greedy policy, $\alpha=1$}
        \label{fig:greedy100}
    \end{figure}
    \begin{figure}
        \centering
        \includegraphics[scale=0.33]{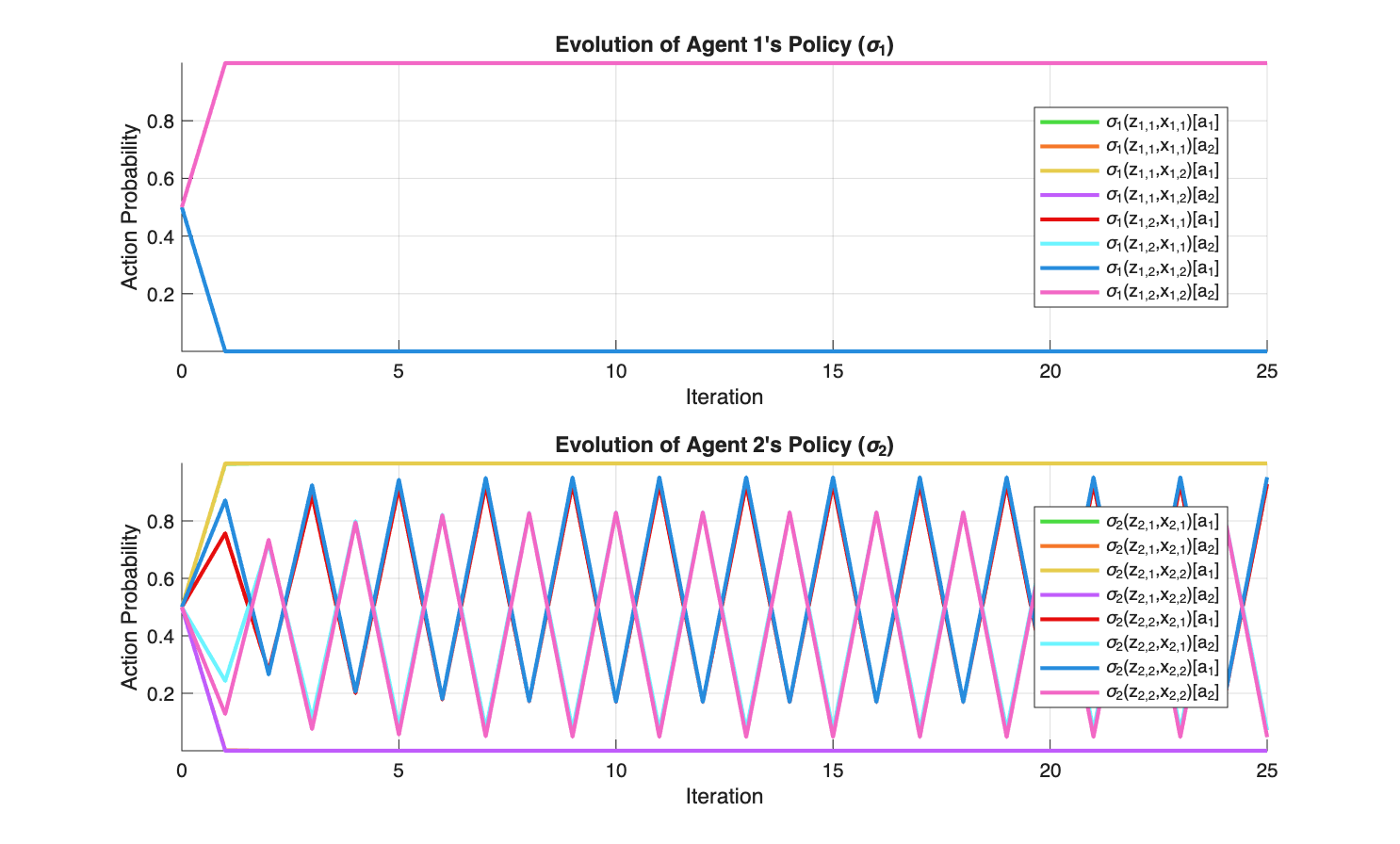}
        \caption{Softmax policy, $\tau_1=\tau_2=1,\,\alpha=1$}
        \label{fig:soft100}
    \end{figure}
\section{Conclusion}
In this work, we studied multi-agent systems operating in stochastic environments governed jointly by exogenous factors and by agents' actions, where the latter have a bounded effect on the environment's transition dynamics. Agents in this setting perceive only partial, noisy signals from the environment, forming internal models and acting upon them in a distributed fashion. We analyzed what emerges when agents implement Q-value iteration with per-iteration model updates, and proved that sufficiently small coupling between agents and the environment guarantees convergence to a steady state; specifically, that the dynamics converge to an empirical evidence equilibrium (EEE). We further extended the framework by replacing the greedy optimization step in Q-value iteration with a softmax policy, establishing a contraction result that holds under sufficiently small coupling. The fully exogenous regime is well studied, as the absence of feedback decouples each agent's problem into an independent MDP, while the present work addresses the intermediate case where agents do affect the environment, but their influence is bounded. What remains open is to characterize the behavior of the system as the environment becomes increasingly governed by agents’ actions relative to the exogenous process.

\appendix
\renewcommand\thesection{\Alph{section}}
\renewcommand\thetheorem{\thesection.\arabic{theorem}}
\setcounter{section}{1}
\setcounter{theorem}{0}
We provide proofs of all results stated in the main text, in order of appearance.
\begin{proofOf}{Proposition~\ref{prop:QvalueStability}} \textbf{(1) Difference decomposition.}
Let $\varepsilon_{i,Q}^{(t)} := \|Q_i^{(t)} - \bar{Q}_i^{(t)}\|_\infty,$  where throughout this proof, $\|\cdot\|_\infty$ denotes the sup-norm over all arguments of the function. Fix $(z_i,x_i,a_i)$. Denote $\varphi_i^{s_i}(x_i^+) := \varphi_i(x_i^+\mid x_i,s_i,a_i)$
for brevity. Then
\begin{align*}
&|Q_i^{(t+1)} - \bar{Q}_i^{(t+1)}|\\
&= \Big|\sum_{s_i}\mu_i^{(t)}[s_i]\sum_{x_i^+}
\varphi_i^{s_i}(x_i^+)
\big[g_i + \delta_i V_i^{(t)}(l_i(z_i,s_i),x_i^+)\big]\\
&\quad-\sum_{s_i}\bar\mu_i^{(t)}[s_i]\sum_{x_i^+}
\varphi_i^{s_i}(x_i^+)
\big[g_i + \delta_i \bar V_i^{(t)}(l_i(z_i,s_i),x_i^+)\big]\Big|.
\end{align*}
Adding and subtracting the cross term
\[
\sum_{s_i}\bar\mu_i^{(t)}[s_i]\sum_{x_i^+}
\varphi_i^{s_i}(x_i^+)
\big[g_i+\delta_i V_i^{(t)}(l_i(z_i,s_i),x_i^+)\big],
\]
and applying the triangle inequality:
\begin{align*}
&|Q_i^{(t+1)} - \bar{Q}_i^{(t+1)}|\\
&\leq \Big|\sum_{s_i}(\mu_i^{(t)}-\bar\mu_i^{(t)})[s_i]
\sum_{x_i^+}\varphi_i^{s_i}(x_i^+)\\
&\qquad\cdot\big[g_i+\delta_i V_i^{(t)}(l_i(z_i,s_i),x_i^+)\big]\Big|\\
&\quad+\delta_i\Big|\sum_{s_i}\bar\mu_i^{(t)}[s_i]
\sum_{x_i^+}\varphi_i^{s_i}(x_i^+)\\
&\qquad\cdot\big[V_i^{(t)}(l_i(z_i,s_i),x_i^+)
-\bar V_i^{(t)}(l_i(z_i,s_i),x_i^+)\big]\Big|\\
&\leq \sum_{s_i}|(\mu_i^{(t)}-\bar\mu_i^{(t)})[s_i]|
\sum_{x_i^+}\varphi_i^{s_i}(x_i^+)\\
&\qquad\cdot\big|g_i+\delta_i V_i^{(t)}(l_i(z_i,s_i),x_i^+)\big|\\
&\quad+\delta_i\sum_{s_i}\bar\mu_i^{(t)}[s_i]
\sum_{x_i^+}\varphi_i^{s_i}(x_i^+)\\
&\qquad\cdot\big|V_i^{(t)}(l_i(z_i,s_i),x_i^+)
-\bar V_i^{(t)}(l_i(z_i,s_i),x_i^+)\big|.
\end{align*}
For the first term, applying $|g_i + \delta_i V_i^{(t)}|
\leq |g_i| + \delta_i|V_i^{(t)}|$
and bounding by sup-norms:
\begin{align*}
&\sum_{s_i}|(\mu_i^{(t)}-\bar\mu_i^{(t)})[s_i]|
\sum_{x_i^+}\varphi_i^{s_i}(x_i^+)
\big|g_i+\delta_i V_i^{(t)}(l_i(z_i,s_i),x_i^+)\big|\\
&\leq \sum_{s_i}|(\mu_i^{(t)}-\bar\mu_i^{(t)})[s_i]|
\big(G_i + \delta_i\|V_i^{(t)}\|_\infty\big)
{\sum_{x_i^+}\varphi_i^{s_i}(x_i^+)}\\
&= \sum_{s_i}|(\mu_i^{(t)}-\bar\mu_i^{(t)})[s_i]|
\big(G_i+\delta_i\|V_i^{(t)}\|_\infty\big).
\end{align*}
For the second term, pulling $\|V_i^{(t)}-\bar V_i^{(t)}\|_\infty$
outside and using $\sum_{x_i^+}\varphi_i^{s_i}(x_i^+)=1$
and $\sum_{s_i}\bar\mu_i^{(t)}[s_i]=1$:
\begin{align*}
&\delta_i\sum_{s_i}\bar\mu_i^{(t)}[s_i]
\sum_{x_i^+}\varphi_i^{s_i}(x_i^+)\\
&\quad\cdot\big|V_i^{(t)}(l_i(z_i,s_i),x_i^+)
-\bar V_i^{(t)}(l_i(z_i,s_i),x_i^+)\big|\\
&\leq \delta_i\|V_i^{(t)}-\bar V_i^{(t)}\|_\infty
{\sum_{s_i}\bar\mu_i^{(t)}[s_i]
\sum_{x_i^+}\varphi_i^{s_i}(x_i^+)}\\
&= \delta_i\|V_i^{(t)}-\bar V_i^{(t)}\|_\infty.
\end{align*}
Combining both terms:
\begin{align*}
&|Q_i^{(t+1)} - \bar{Q}_i^{(t+1)}|\\
&\leq \sum_{s_i}|(\mu_i^{(t)}-\bar\mu_i^{(t)})[s_i]|
\big(G_i+\delta_i\|V_i^{(t)}\|_\infty\big)\\
&\quad+\delta_i\|V_i^{(t)}-\bar V_i^{(t)}\|_\infty.
\end{align*}
\textbf{(2) Bounding $\|V_i^{(t)}\|_\infty$ and 
$\|V_i^{(t)} - \bar{V}_i^{(t)}\|_\infty$.}
We prove by induction that 
$\|Q_i^{(t)}\|_\infty \le \frac{G_i}{1-\delta_i}$ 
and $\|V_i^{(t)}\|_\infty \le \frac{G_i}{1-\delta_i}.$

\textit{Base case:} By initialization, 
$\|Q_i^{(0)}\|_\infty \le \frac{G_i}{1-\delta_i}$, 
hence $\|V_i^{(0)}\|_\infty = \|\max_{a_i} Q_i^{(0)}\|_\infty 
\le \frac{G_i}{1-\delta_i}.$

\textit{Inductive step:} Suppose 
$\|Q_i^{(t)}\|_\infty \le \frac{G_i}{1-\delta_i}$, 
so $\|V_i^{(t)}\|_\infty \le \frac{G_i}{1-\delta_i}.$
Then for any $(z_i,x_i,a_i)$:
\begin{align*}
&|Q_i^{(t+1)}(z_i,x_i,a_i)|\\
&= \Big|\sum_{s_i}\mu_i^{(t)}[s_i]\sum_{x_i^+}
\varphi_i^{s_i}(x_i^+)
\big[g_i + \delta_i V_i^{(t)}(l_i(z_i,s_i),x_i^+)\big]\Big|\\
&\leq \sum_{s_i}\mu_i^{(t)}[s_i]\sum_{x_i^+}
\varphi_i^{s_i}(x_i^+)
\big(G_i + \delta_i\|V_i^{(t)}\|_\infty\big)\\
&= G_i + \delta_i\|V_i^{(t)}\|_\infty\\
&\leq G_i + \delta_i\frac{G_i}{1-\delta_i}
= \frac{G_i}{1-\delta_i},
\end{align*}
where we used $\sum_{s_i}\mu_i^{(t)}[s_i]=1$ 
and $\sum_{x_i^+}\varphi_i^{s_i}(x_i^+)=1.$

Now, examine $\rvert V_i^{(t)} - \bar{V}_i^{(t)} \lvert$ for a fixed $z_i$ and $x_i.$
\begin{align*}
&|V_i^{(t)}(z_i,x_i) - \bar{V}_i^{(t)}(z_i,x_i)|\\
&= |\max_{a_i}Q_i^{(t)}(z_i,x_i,a_i) 
- \max_{a_i}\bar{Q}_i^{(t)}(z_i,x_i,a_i)|\\
&\leq \max_{a_i}|Q_i^{(t)}(z_i,x_i,a_i) 
- \bar{Q}_i^{(t)}(z_i,x_i,a_i)|\\
&\leq \varepsilon_{i,Q}^{(t)},
\end{align*}
hence $\|V_i^{(t)} - \bar{V}_i^{(t)}\|_\infty 
\leq \varepsilon_{i,Q}^{(t)}.$

\textbf{(3) Bounding $\varepsilon_{i,Q}^{(t+1)}$.}
Substituting the bounds from Steps~(1) and~(2), 
taking the sup over $(z_i,x_i,a_i)$, and using
\[
\sum_{s_i}|\mu_i^{(t)}[s_i]-\bar\mu_i^{(t)}[s_i]|
\leq |\mathcal{S}_i|\,\varepsilon_{i,\mu}^{(t)},
\]
we obtain:
\begin{align*}
\varepsilon_{i,Q}^{(t+1)}
&\le \varepsilon^{(t)}_{i,\mu}|\mathcal{S}_i|G_i  
+ \delta_i\varepsilon^{(t)}_{i,\mu}|\mathcal{S}_i|
\frac{G_i}{1-\delta_i} + \delta_i\varepsilon_{i,Q}^{(t)}\\
&= \frac{\varepsilon^{(t)}_{i,\mu}|\mathcal{S}_i|G_i}{1-\delta_i} 
+ \delta_i\varepsilon_{i,Q}^{(t)}\\
&\le \frac{\varepsilon_{i,\mu}|\mathcal{S}_i|G_i}{1-\delta_i} 
+ \delta_i\varepsilon_{i,Q}^{(t)}.
\end{align*}
\textbf{(4) Solving the recursion.} Iterating gives
\[
\varepsilon_{i,Q}^{(t+1)} \le 
\frac{\varepsilon_{i,\mu}|\mathcal{S}_i|G_i}{1-\delta_i}
\sum_{k=0}^{t}\delta_i^k + \delta_i^{t+1}\varepsilon_{i,Q}^{(0)}
\le \frac{\varepsilon_{i,\mu}|\mathcal{S}_i|G_i}{(1-\delta_i)^2} 
+ \delta_i^{t+1}\varepsilon_{i,Q}^{(0)}.
\]
Taking $t\to\infty$ and using $\delta_i\in(0,1)$:
\[
\lim_{t\to\infty}\|Q_i^{(t)} - \bar Q_i^{(t)}\|_\infty 
\le \frac{\varepsilon_{i,\mu}|\mathcal{S}_i|G_i}{(1-\delta_i)^2}.
\]

\end{proofOf}
\begin{proposition}
\label{prop:EnvToModel}
Consider a weakly-coupled stochastic game with coupling 
value $\lambda.$ Let $\sigma$ and $\bar\sigma$ 
be two distinct joint strategies, and let $\mu$ and $\bar\mu$ 
be the corresponding consistent agent models. Let $\pi$ and 
$\bar\pi$ denote the stationary distributions of the joint 
state process $(w,z,x)$ induced by $\mathbf{N}$ and $\sigma$, 
and by $\mathbf{N}$ and $\bar\sigma$, respectively, whose 
existence and uniqueness are guaranteed by 
Assumption~\ref{StochasticAssumption}. Let $\kappa$, $m_i$ 
be the constants from Assumption~\ref{ass:regularity}.
Under Assumptions~\ref{StochasticAssumption} 
and~\ref{ass:regularity}, for any agent $i$, 
the following bound holds:
\begin{align*}
\|\mu_i(z_i,x_i)-\bar{\mu}_{i}(z_i,x_i)\|_\infty
&\le
\frac{(1+p_i^{\max})\,E_i(z_i,x_i)}{m_i}  \\
&\hspace{-7em} \le \frac{(1+p_i^{\max})\kappa\,|\mathcal W|\,
|\mathcal Z_{-i}|\,|\mathcal X_{-i}| \, 
\hat{\mathcal{A}} \|\sigma - \bar\sigma\|_\infty \lambda}{m_i},
\end{align*} 
where $E_i(z_i,x_i)
= \sum_{w,z_{-i},x_{-i}}
|\pi(w,z,x)-\bar{\pi}(w,z,x)|$, 
$\hat{\mathcal{A}} = \sum_{j=1}^{|\mathcal{I}|} |\mathcal{A}_j|$, 
and $p_i^{\max}=\max_{s_i,w} \mathbb{P}(s_i\mid w).$
\end{proposition}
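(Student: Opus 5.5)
The proof has two stages: a ratio-perturbation bound that gives the first inequality, and a perturbation bound on the joint transition matrix, fed through Meyer sensitivity, that gives the second.

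\textbf{Step 1: write the consistent model as a ratio of stationary masses.}
By the definition of the consistent model,
\[
\mu_i(z_i,x_i)[s_i]=\frac{N(s_i)}{D}.
\]
Here $N(s_i)=\sum_{w,z_{-i},x_{-i}}\pi(w,z,x)\,\mathbb{P}(s_i\mid w)$ and $D=\pi(z_i,x_i)=\sum_{w,z_{-i},x_{-i}}\pi(w,z,x)$. This uses that the signal $s_i\sim M_i(w)$ is drawn from the current environment state only, so it is conditionally independent of $(z,x)$ given $w$. Define $\bar N,\bar D$ analogously under $\bar\pi$.

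\textbf{Step 2: the first inequality.}
Use the standard ratio perturbation
\[
\frac{N}{D}-\frac{\bar N}{\bar D}=\frac{N-\bar N}{D}+\frac{\bar N}{\bar D}\cdot\frac{\bar D-D}{D}.
\]
Each piece is bounded as follows:
\begin{itemize}
\item $|N(s_i)-\bar N(s_i)|\le p_i^{\max}E_i(z_i,x_i)$.
\item $|D-\bar D|\le E_i(z_i,x_i)$.
\item $\bar N/\bar D=\bar\mu_i[s_i]\le 1$.
\item $D\ge m_i$, by the minimal-mass condition.
\end{itemize}
Together these give $(p_i^{\max}+1)E_i/m_i$, which is the first inequality.

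\textbf{Step 3: reduce to the transition matrices.}
For the second inequality, bound $E_i(z_i,x_i)\le |\mathcal W||\mathcal Z_{-i}||\mathcal X_{-i}|\,\|\pi-\bar\pi\|_\infty$ by counting the terms in the sum. Meyer sensitivity then gives $\|\pi-\bar\pi\|_\infty\le \kappa\|T-\bar T\|_{r,\infty}$. What remains is to show
\[
\|T-\bar T\|_{r,\infty}\le \hat{\mathcal A}\,\|\sigma-\bar\sigma\|_\infty\,\lambda.
\]

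\textbf{Step 4 (main obstacle): bound $\|T-\bar T\|_{r,\infty}$.}
Fix a row $(w,z,x)$. Then
\[
T=\sum_a \prod_j\sigma_j(z_j,x_j)[a_j]\; K_a,
\]
where
\[
K_a(w^+,z^+,x^+)=\Phi_{w,w^+}(a)\sum_{s}\prod_j M_j(w)[s_j]\,\mathbf 1\{z_j^+=l_j(z_j,s_j)\}\,\varphi_j(x_j^+\mid x_j,s_j,a_j).
\]
The key trick is to subtract the action-independent kernel $K_U$, obtained by replacing $\Phi(a)$ with $\Phi_U$ and each $\varphi_j$ with $\varphi_{U,j}$. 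Since $\sum_a\prod_j\sigma_j[a_j]=\sum_a\prod_j\bar\sigma_j[a_j]=1$, the $K_U$ terms cancel, and
\[
T-\bar T=\sum_a\Big(\prod_j\sigma_j[a_j]-\prod_j\bar\sigma_j[a_j]\Big)(K_a-K_U).
\]
Two bounds then finish the argument:
\begin{itemize}
\item \emph{Distance to the baseline.} A telescoping argument over the product of stochastic kernels gives $\|K_a-K_U\|_{\text{row }\ell_1}\le \varepsilon_\Phi+\sum_j\varepsilon_\varphi=\lambda$. Each factor is swapped one at a time, and the remaining factors are stochastic, so they contribute at most $1$.
\item \emph{Policy difference.} Telescoping the product over agents, in $\ell_1$ over $a$, gives $\sum_a\big|\prod_j\sigma_j[a_j]-\prod_j\bar\sigma_j[a_j]\big|\le\sum_j\sum_{a_j}|\sigma_j[a_j]-\bar\sigma_j[a_j]|\le \hat{\mathcal A}\,\|\sigma-\bar\sigma\|_\infty$.
\end{itemize}
Multiplying the two bounds yields the claim.

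I expect this last step to be the delicate part. The cancellation via $K_U$ is what produces the product of $\lambda$ and $\|\sigma-\bar\sigma\|_\infty$ rather than just one of them. Care is also needed with the signal marginalization, because the signals are shared across the $z$ and $x$ updates within a single row.
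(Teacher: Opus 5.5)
Your proof is correct and follows the paper's argument. Both proofs use the ratio bound via $\bar N\le\bar D$ and $D\ge m_i$, the term count plus Meyer sensitivity, subtracting the action-independent uncoupled kernel so the zero-sum policy difference annihilates it, and the telescoping bound $\sum_a|\sigma(z,x)[a]-\bar\sigma(z,x)[a]|\le\hat{\mathcal{A}}\,\|\sigma-\bar\sigma\|_\infty$. The differences are minor and, if anything, in your favor: the paper splits $K_a-K_U$ into two terms (swap the local kernels, then swap $\Phi$), and it marginalizes the signal into $\mathbb{P}(x_i^+\mid x_i,a_i,w)$ while summing only over $(w^+,x^+)$, which glosses over the $z^+$ coordinate; keeping $s$ inside the kernel and applying the triangle inequality over $s$, as you propose, handles the shared signal rigorously and gives the same $\lambda$.
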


\begin{proof} \textbf{(1) Bounding the perturbation of the joint transition matrix.} 
Let the current joint state be $\psi := (w,z,x)$ and the 
next joint state $\psi^+ := (w^+,z^+,x^+)$. The transition 
matrix difference at $(\psi,\psi^+)$ is:
\begin{multline*}
|T_{\psi,\psi^+} - \bar{T}_{\psi,\psi^+}| \\
= \left|\sum_{a}\prod_{i\in\mathcal{I}}
\sum_{s_i}\mathbb{P}(s_i\mid w)\,
\mathbf{1}_{z_i^+=l_i(z_i,s_i)}\,
\varphi_i(x_i,s_i,a_i)[x_i^+]
\,\right.\\
\left.\cdot\Phi_{ww^+}(a)\left(\sigma(z,x)[a]
-\bar{\sigma}(z,x)[a]\right)\right|.
\end{multline*}
Since each agent's policy $\sigma_i$ is local, 
$\sigma(z,x)[a] = \prod_{i\in\mathcal{I}}\sigma_i(z_i,x_i)[a_i]$.
Define $\varepsilon_{\sigma} := \|\sigma-\bar{\sigma}\|_\infty.$
By the weak coupling of the game with value $\lambda$ 
(Definition~\ref{weak}), there exist action-independent kernels 
$\varphi_{U,i}(x_i,s_i)$, $\Phi_U$, and constants 
$\varepsilon_\Phi,\varepsilon_\varphi>0$ such that
\[
\|\varphi_i(x_i,s_i,a_i)-\varphi_{U,i}(x_i,s_i)\|_{r,\infty}
\le\varepsilon_\varphi, \ 
\|\Phi(a)-\Phi_U\|_{r,\infty}\le\varepsilon_\Phi,
\]
for all $a\in\mathcal{A}$ and $i\in\mathcal{I}$. Define the 
marginal transition kernels
\[
\mathbb{P}(x_i^+\mid x_i,a_i,w) := \sum_{s_i}\mathbb{P}(s_i\mid w)\,
\varphi_i(x_i,s_i,a_i)[x_i^+],
\]
\[
\mathbb{P}_U(x_i^+\mid x_i,w) := \sum_{s_i}\mathbb{P}(s_i\mid w)\,
\varphi_{U,i}(x_i,s_i)[x_i^+],
\]
where $\mathbb{P}(\cdot\mid x_i,a_i,w)$ is action-dependent 
and $\mathbb{P}_U(\cdot\mid x_i,w)$ is its action-independent 
counterpart corresponding to $\varphi_{U,i}$, satisfying

\[\sum_{x_i^+}\left|\mathbb{P}(x_i^+\mid x_i,a_i,w)
-\mathbb{P}_U(x_i^+\mid x_i,w)\right|
\le\varepsilon_\varphi, \quad \forall\, a_i\in\mathcal{A}_i.\]
Since $\sigma$ and $\bar\sigma$ are probability distributions, 
$\sum_a(\sigma(z,x)[a]-\bar\sigma(z,x)[a])=0$, and as 
$\mathbb{P}_U$ and $\Phi_U$ are action-independent:
\begin{align*}
&\sum_a\prod_{i\in\mathcal{I}}\mathbb{P}_U(x_i^+\mid x_i,w)
\cdot(\Phi_U)_{ww^+}
\left(\sigma(z,x)[a]-\bar\sigma(z,x)[a]\right)\\
&=0.
\end{align*}
Hence, we subtract this term inside the absolute value 
without changing its value. Since $z_i^+=l_i(z_i,s_i)$ 
is a deterministic function of $(z_i,s_i)$, the indicator 
$\mathbf{1}_{z_i^+=l_i(z_i,s_i)}$ fixes $z^+$ given 
$(z,s)$, so the sum over $\psi^+=(w^+,z^+,x^+)$ reduces 
to a sum over $(w^+,x^+)$ only. We then split the 
row-sum difference into Term~1 and Term~2:
\begin{align*}
&\sum_{\psi^+}|T_{\psi,\psi^+}-\bar{T}_{\psi,\psi^+}|\\
&=\sum_{x^+,w^+}\left|\sum_{a}\left(
\prod_{i\in\mathcal{I}}\mathbb{P}(x_i^+\mid x_i,a_i,w)
\cdot\Phi_{ww^+}(a)-\right.\right.\\
&\quad\left.\left.
\prod_{i\in\mathcal{I}}\mathbb{P}_U(x_i^+\mid x_i,w)
\cdot(\Phi_U)_{ww^+}\right)
\left(\sigma(z,x)[a]-\bar{\sigma}(z,x)[a]\right)\right|\\
&\leq \sum_{x^+,w^+}\left|\sum_a\text{Term 1}
\cdot(\sigma-\bar\sigma)\right|\\
&\quad+\sum_{x^+,w^+}\left|\sum_a\text{Term 2}
\cdot(\sigma-\bar\sigma)\right|,
\end{align*}
where
\begin{align*}
\text{Term 1}&:=\left(\prod_{i\in\mathcal{I}}
\mathbb{P}(x_i^+\mid x_i,a_i,w)
-\prod_{i\in\mathcal{I}}\mathbb{P}_U(x_i^+\mid x_i,w)
\right)\\
&\quad\cdot\Phi_{ww^+}(a),\\
\text{Term 2}&:=\prod_{i\in\mathcal{I}}
\mathbb{P}_U(x_i^+\mid x_i,w)
\cdot\left(\Phi_{ww^+}(a)-(\Phi_U)_{ww^+}\right).
\end{align*}

We will make use of the following telescoping argument 
throughout the proof. For two sequences of scalars $p_i, \bar{p}_i \in [0,1]$, 
by successively adding and subtracting intermediate terms:
\begin{align*}
&\prod_{i\in\mathcal{I}} p_i 
- \prod_{i\in\mathcal{I}} \bar{p}_i \\
&= p_1\prod_{i\geq 2}p_i
{-\bar{p}_1\prod_{i\geq 2}p_i
+\bar{p}_1\prod_{i\geq 2}p_i}
-\bar{p}_1\bar{p}_2\prod_{i\geq 3}p_i
{+\cdots}
-\prod_{i\in\mathcal{I}}\bar{p}_i\\
&=(p_1-\bar{p}_1)\prod_{i\geq 2}p_i
+\bar{p}_1(p_2-\bar{p}_2)\prod_{i\geq 3}p_i
+\cdots\\
&\quad+\left(\prod_{i<|\mathcal{I}|}\bar{p}_i\right)
(p_{|\mathcal{I}|}-\bar{p}_{|\mathcal{I}|})\\
&= \sum_{j\in\mathcal{I}} \left(\prod_{i<j}\bar{p}_i\right)
(p_j - \bar{p}_j)
\left(\prod_{i>j}p_i\right).
\end{align*}

Taking absolute values and using $p_i, \bar{p}_i \le 1$:
\[
\left|\prod_{i\in\mathcal{I}} p_i 
- \prod_{i\in\mathcal{I}} \bar{p}_i\right|
\leq \sum_{j\in\mathcal{I}} 
\left|p_j - \bar{p}_j\right|
\leq |\mathcal{I}|\,\varepsilon,
\]
whenever $|p_j - \bar{p}_j| \le \varepsilon$ for all 
$j\in\mathcal{I}$. Furthermore, if $p_i = 
\sigma_i(z_i,x_i)[a_i]$ and $\bar{p}_i = 
\bar{\sigma}_i(z_i,x_i)[a_i]$ are probability distributions, 
then $\prod_{i<j}\bar{p}_i\prod_{i>j}p_i$ sum to $1$ 
over $a_{-j}$, giving:
\begin{align*}
&\sum_a |\sigma(z,x)[a] - \bar{\sigma}(z,x)[a]| \\
&\leq \sum_{j\in\mathcal{I}} \sum_{a_j} 
\left|\sigma_j(z_j,x_j)[a_j] 
- \bar{\sigma}_j(z_j,x_j)[a_j]\right|\\
&\leq \sum_{j\in\mathcal{I}}|\mathcal{A}_j|
\cdot\varepsilon_\sigma
= \hat{\mathcal{A}}\,\varepsilon_\sigma.
\end{align*}
\textit{Bounding Term 2.} Since 
$\prod_{i\in\mathcal{I}}\mathbb{P}_U(x_i^+\mid x_i,w)\geq 0$ 
and 
$\sum_{x^+}\prod_{i\in\mathcal{I}}
\mathbb{P}_U(x_i^+\mid x_i,w)=1,$ 
we get:
\begin{align*}
&\sum_{x^+,w^+}\left|\sum_a \prod_{i\in\mathcal{I}}
\mathbb{P}_U(x_i^+\mid x_i,w)\right.\\
&\qquad\left.\cdot\left(\Phi_{ww^+}(a)
-(\Phi_U)_{ww^+}\right)(\sigma-\bar\sigma)\right|\\
&= \sum_{w^+}\left|\sum_a\left(\Phi_{ww^+}(a)
-(\Phi_U)_{ww^+}\right)(\sigma-\bar\sigma)\right|\\
&\leq \sum_{w^+}\sum_a
\left|\Phi_{ww^+}(a)-(\Phi_U)_{ww^+}\right|
\cdot\left|\sigma(z,x)[a]-\bar\sigma(z,x)[a]\right|\\
&\leq \left(\sum_{w^+}\max_a
\left|\Phi_{ww^+}(a)-(\Phi_U)_{ww^+}\right|\right)\\
&\qquad\cdot\sum_a\left|\sigma(z,x)[a]
-\bar\sigma(z,x)[a]\right|\\
&\leq \varepsilon_\Phi\,\hat{\mathcal{A}}\,\varepsilon_\sigma,
\end{align*}
where the first equality uses that 
$\prod_{i\in\mathcal{I}}\mathbb{P}_U(x_i^+\mid x_i,w)$ 
is nonneg\-ative and sums to $1$ over $x^+$, so it factors 
out of the absolute value; the second inequality is the 
triangle inequality; the third separates the $a$-dependence; 
and the last uses 
$\sum_{w^+}|\Phi_{ww^+}(a)-(\Phi_U)_{ww^+}|
\leq\|\Phi(a)-\Phi_U\|_{r,\infty}\leq\varepsilon_\Phi$ 
and the telescoping bound 
$\sum_a|\sigma(z,x)[a]-\bar\sigma(z,x)[a]|
\leq\hat{\mathcal{A}}\,\varepsilon_\sigma$.

\textit{Bounding Term 1.} Applying the telescoping argument 
with $p_i=\mathbb{P}(x_i^+\mid x_i,a_i,w)$ and 
$\bar{p}_i=\mathbb{P}_U(x_i^+\mid x_i,w)$, and summing 
over $x^+=(x_1^+,\ldots,x_{|\mathcal{I}|}^+)$:
\begin{align*}
&\sum_{x^+}\left|\prod_{i\in\mathcal{I}}
\mathbb{P}(x_i^+\mid x_i,a_i,w) 
- \prod_{i\in\mathcal{I}}\mathbb{P}_U(x_i^+\mid x_i,w)
\right|\\
&\leq \sum_j\sum_{x^+}
\prod_{i<j}\mathbb{P}_U(x_i^+\mid x_i,w)\\
&\quad\cdot\left|\mathbb{P}(x_j^+\mid x_j,a_j,w)
-\mathbb{P}_U(x_j^+\mid x_j,w)\right|\\
&\quad\cdot\prod_{i>j}\mathbb{P}(x_i^+\mid x_i,a_i,w)\\
&= \sum_j\sum_{x_j^+}
\left|\mathbb{P}(x_j^+\mid x_j,a_j,w)
-\mathbb{P}_U(x_j^+\mid x_j,w)\right|\\
&\leq |\mathcal{I}|\,\varepsilon_\varphi,
\end{align*}
where the equality uses that $x^+$ factorizes over agents, 
so summing over $x_i^+$ for $i\neq j$ collapses each 
marginal $\mathbb{P}_U$ and $\mathbb{P}$ to $1$, and the 
last inequality applies the row-sum norm bound 
$\sum_{x_j^+}|\mathbb{P}(x_j^+\mid x_j,a_j,w)
-\mathbb{P}_U(x_j^+\mid x_j,w)|
\leq\varepsilon_\varphi$ for each $j$.

Using this, $\Phi_{ww^+}(a)\leq 1$, and the triangle 
inequality:
\begin{align*}
&\sum_{x^+,w^+}\left|\sum_a \text{Term 1}
\cdot(\sigma-\bar\sigma)\right|\\
&\leq \sum_{x^+,w^+}\sum_a
\left|\prod_{i}\mathbb{P}(x_i^+\mid x_i,a_i,w)
-\prod_{i}\mathbb{P}_U(x_i^+\mid x_i,w)\right|\\
&\qquad\cdot\Phi_{ww^+}(a)
\cdot\left|\sigma(z,x)[a]-\bar\sigma(z,x)[a]\right|\\
&\leq |\mathcal{I}|\,\varepsilon_\varphi
\cdot\sum_{w^+}\sum_a
\left|\sigma(z,x)[a]-\bar\sigma(z,x)[a]\right|\\
&\qquad\cdot\Phi_{ww^+}(a)\\
&= |\mathcal{I}|\,\varepsilon_\varphi
\cdot\sum_a
\left|\sigma(z,x)[a]-\bar\sigma(z,x)[a]\right|\\
&\qquad\cdot{\sum_{w^+}
\Phi_{ww^+}(a)}\\
&= |\mathcal{I}|\,\varepsilon_\varphi\,
\hat{\mathcal{A}}\,\varepsilon_\sigma,
\end{align*}
where the second inequality uses 
$\sum_{x^+}|\prod_{i}\mathbb{P}-\prod_{i}\mathbb{P}_U|
\leq|\mathcal{I}|\,\varepsilon_\varphi$, the first equality 
swaps the order of summation and uses 
$\sum_{w^+}\Phi_{ww^+}(a)=1$ for each fixed $a$ and $w$ 
since $\Phi(a)$ is a stochastic matrix, and the last 
equality applies the telescoping bound 
$\sum_a|\sigma(z,x)[a]-\bar\sigma(z,x)[a]|
\leq\hat{\mathcal{A}}\,\varepsilon_\sigma$.

\textit{Combining.} Adding Term 1 and Term 2:
\begin{align*}
\sum_{\psi^+}|T_{\psi,\psi^+}-\bar{T}_{\psi,\psi^+}| 
&\le \hat{\mathcal{A}}\,\varepsilon_\sigma
\left(\varepsilon_\Phi + |\mathcal{I}|
\varepsilon_\varphi\right)
= \hat{\mathcal{A}}\,\varepsilon_\sigma\,\lambda,
\end{align*}
where $\lambda= \varepsilon_\Phi 
+ |\mathcal{I}|\varepsilon_\varphi.$

\noindent
\textbf{(2) Bounding stationary distribution differences using Meyer’s condition number.} By Meyer\cite{Meyer} and Assumption~\ref{ass:regularity}, \[ \|\pi-\bar\pi\|_\infty \le \kappa\,\|T-\bar{T}\|_{r,\infty} \le \kappa\,\hat{\mathcal{A}}\,\varepsilon_{\sigma}\,\lambda, \] where $\kappa$ is a condition number depending only on $T_U,$ where $T_U$ is the transition matrix under the uncoupled kernels $\Phi_U$ and $\varphi_{U,i}$. This implies that for every joint-state $\psi$, we have  
\[
|\pi(\psi)-\bar{\pi}(\psi)| \leq \kappa\hat{\mathcal{A}}\,\varepsilon_{\sigma}\,\lambda.
\]

\textbf{(3) Marginal and joint-probability perturbations.} Let $D(z_i,x_i)=\pi(z_i,x_i)$ and $N(s_i,z_i,x_i)=\mathbb{P}_{\pi}(s_i,z_i,x_i)=\sum_{w,\, z_{-i},\, x_{-i}}\mathbb{P}(s_i\mid w)\,\pi(w,z,x)$. The following holds: 
  \[ |D(z_i,x_i)-\bar{D}(z_i,x_i)| = \left|\sum_{w,z_{-i},x_{-i}}\big(\pi(\psi)-\bar{\pi}(\psi)\big)\right| \] \[ \le \sum_{w,\, z_{-i},\, x_{-i}} |\pi(\psi)-\bar\pi(\psi)| = E_i(z_i,x_i).\] 
\[
|N(s_i,z_i,x_i)-\bar{N}(s_i,z_i,x_i)| \]
\[= \left|\sum_{w,\, z_{-i},\, x_{-i}}\mathbb{P}(s_i\mid w)\big(\pi(\psi)-\bar\pi(\psi)\big)\right| \]
\[\le \sum_{w,\, z_{-i},\, x_{-i}} \mathbb{P}(s_i\mid w)\,|\pi(\psi)-\bar\pi(\psi)| \]
\[\le p_i^{\max} \sum_{w,\, z_{-i},\, x_{-i}} |\pi(\psi)-\bar\pi(\psi)|= p_i^{\max} E_i(z_i,x_i).
\]

\textbf{(4) Algebra for the difference of conditionals.} Using \[ \frac{N}{D}-\frac{\bar{N}}{\bar{D}}=\frac{(N-\bar{N})\bar{D} - \bar{N}(D-\bar{D})}{D \bar{D}}\] and taking absolute values, we get:

\[\hspace{-16ex}
\Big|\frac{N}{D}-\frac{\bar{N}}{\bar{D}}\Big| 
\le \frac{|N-\bar{N}|\,\bar{D} + \bar{N}\,|D-\bar{D}|}{D \bar{D}} \]
\[\le \frac{|N-\bar{N}|\,\bar{D} + \bar{D}\,|D-\bar{D}|}{D \bar{D}}\le \frac{|N-\bar{N}| + |D-\bar{D}|}{m_i}.\]
These inequalities hold since 
\begin{align*}
    \bar{N}(s_i,z_i,x_i) &=\sum_{w,\, z_{-i},\, x_{-i}}\mathbb{P}(s_i\mid w)\bar\pi(w,z,x) \\
    &\leq \sum_{w,z_{-i},x_{-i}}\bar\pi(w,z,x)=\bar{D}(z_i,x_i),
\end{align*} and $m_i\leq \pi(z_i,x_i)=D(z_i,x_i)$ by assumption.
Substituting the bounds from step (3) yields a bound for any $s_i$: 

    \[\big|\mu_i(z_i,x_i)[s_i]-\bar\mu_{i}(z_i,x_i)[s_i]\big| =\Big|\frac{N}{D}-\frac{\bar{N}}{\bar{D}}\Big|
    \]\[ \le \frac{p_i^{\max} E_i(z_i,x_i) + E_i(z_i,x_i)}{m_i} 
    = \frac{(1+p_i^{\max})E_i(z_i,x_i)}{m_i}.\]
Since this holds for any $s_i$, it holds for the infinity norm. 

Finally, using the bound in point (2) along with the definition of $E_i(z_i,x_i),$ a crude count gives us $E_i(z_i,x_i)\le |\mathcal W|\,|\mathcal Z_{-i}|\,|\mathcal X_{-i}|\,\kappa\hat{\mathcal{A}}\,\varepsilon_{\sigma}\,\lambda,$ for all $(z_i,x_i),$ which completes the proof. 
\end{proof}

\begin{proofOf}{Theorem~\ref{ConvTheorem}}
Since $Q_i^*$ is the fixed point of the Bellman equation under $\mu_i^*$, 
setting $\bar{Q}_i^{(t)} = Q_i^*$ and $\bar{\mu}_i^{(t)} = \mu_i^*$ in 
Proposition~\ref{prop:QvalueStability}, and applying the model bound from 
Proposition~\ref{prop:EnvToModel}, we have:
\vspace{-1em}
\begin{align*}
\|Q_i^{(t+1)} -  Q_i^{*}\|_\infty 
&\le \max_{z_i,x_i,k\in\{1,...,t\}}\hspace{-1em} \frac{(1+p_i^{\max})E^{(k)}_i(z_i,x_i) 
\lvert \mathcal{S}_i\rvert G_i}{m_i(1-\delta_i)^2}\\
&\quad + \delta_i^{t+1} \|Q_i^{(0)} -  Q_i^{*}\|_\infty.
\end{align*}
From Proposition~\ref{prop:EnvToModel}, $E^{(t)}_i(z_i,x_i)$ admits a bound linear in $\lambda$, and thus the first term can be arbitrarily small if $\lambda$ is sufficiently small. The second term vanishes as $t\to\infty$ since $\delta_i\in(0,1)$. Hence, for sufficiently small $\lambda$, there exists some iteration $T$ such that for all $t\geq T$ and $i\in\mathcal{I}$:
\[ \|Q_i^{(t)} - Q_i^{*}\|_\infty < \xi_i/2. \]
This implies the greedy action at $Q_i^{(t)}$, for all $t\geq T$, coincides with that of $Q_i^*$. To see this, note that for any rival action $a_i'' \neq \sigma^*_i(z_i,x_i)$:
\begin{align*}
&Q_i^{(t)}(z_i,x_i,\sigma^*_i) - Q_i^{(t)}(z_i,x_i,a_i'') \\
&= \bigl[Q_i^*(z_i,x_i,\sigma^*_i) - Q_i^*(z_i,x_i,a_i'')\bigr] \\
&\quad + \bigl[Q_i^{(t)}(z_i,x_i,\sigma^*_i) - Q_i^*(z_i,x_i,\sigma^*_i)\bigr] \\
&\quad + \bigl[Q_i^*(z_i,x_i,a_i'') - Q_i^{(t)}(z_i,x_i,a_i'')\bigr] \\
&\geq \xi_i - 2\|Q_i^{(t)} - Q_i^*\|_\infty > 0,
\end{align*}
where the last inequality holds since $\|Q_i^{(t)} - Q_i^*\|_\infty < \xi_i/2$. Hence the greedy action under $Q_i^{(t)}$ coincides with $\sigma_i^*$. Thus, the $\|\sigma^{(t)} - \sigma^*\|_\infty$ factor in $E^{(t)}_i(z_i,x_i)$ vanishes. Consequently, $\lVert \mu_i^{(t)}-\mu_i^{*} \rVert_\infty=0$ and
\[\|Q_i^{(t+1)} -  Q_i^{*}\|_\infty 
\le \delta_i \|Q_i^{(t)} -  Q_i^{*}\|_\infty \quad \forall t>T.\]
Convergence is guaranteed since 
\[\|Q_i^{(t+1)} -  Q_i^{*}\|_\infty \leq \delta_i^{t+1-T}\|Q_i^{(T)} -  Q_i^{*}\|_\infty< \delta_i^{t+1-T} \xi_i/2.\]
\end{proofOf}

\begin{proofOf}{Theorem~\ref{prop:Contraction}}
From \cite{GP18}, the softmax function with temperature $\tau_i$ is $\frac{1}{\tau_i}$-Lipschitz continuous with respect to the $2$-norm. Combining this with the standard inequalities relating the infinity and $2$-norms for vectors in $\mathbb{R}^{|\mathcal{A}_i|}$, we get:
\begin{align*}
\lVert {\sigma}_i(z_i,x_i) - \bar\sigma_i(z_i,x_i) \rVert_\infty
&\leq \lVert {\sigma}_i(z_i,x_i) - \bar\sigma_i(z_i,x_i) \rVert_2 \\
&\hspace{-5ex}\leq \frac{1}{\tau_i} \lVert {{Q}}_i(z_i,x_i) - \bar{Q}_i(z_i,x_i) \rVert_2 \\
&\hspace{-5ex}\leq \frac{\sqrt{|\mathcal{A}_i|}}{\tau_i} \lVert {{Q}}_i(z_i,x_i) - \bar{Q}_i(z_i,x_i) \rVert_\infty.
\end{align*}
Hence, 
\[
\lVert {\sigma} - \bar\sigma \rVert_\infty \leq \max_i \frac{\sqrt{|\mathcal{A}_i|}}{\tau_i} \lVert {{Q}} - \bar{Q} \rVert_\infty.
\]
From Proposition~\ref{prop:EnvToModel}, \begin{align*}
\|\mu_i(z_i,x_i)-\bar{\mu}_{i}(z_i,x_i)\|_\infty
  \\
&\hspace{-7em} \le \frac{(1+p_i^{\max})\kappa\,|\mathcal W|\,|\mathcal Z_{-i}|\,|\mathcal X_{-i}| \, \hat{\mathcal{A}} \|\sigma - \bar\sigma\|_\infty \lambda}{m_i}.
\end{align*}

Additionally, the proof of Proposition~\ref{prop:QvalueStability} shows that
\begin{align*}
\varepsilon_{i,Q}^{(t+1)}
&\le \frac{\varepsilon^{(t)}_{i,\mu} \lvert \mathcal{S}_i\rvert G_i}{1-\delta_i} + \delta_i \varepsilon_{i,Q}^{(t)},
\end{align*} where $\varepsilon_{i,Q}^{(t)} = \|Q_i^{(t)} - \bar{Q}_i^{(t)}\|_\infty$ and $\varepsilon^{(t)}_{i,\mu} =  \max_{z_i,x_i}\big\lVert \mu^{(t)}_i(z_i,x_i)-\bar\mu^{(t)}_{i}(z_i,x_i)\big\rVert_\infty.$

\vspace{1em}
Substituting the bound on $\|\sigma - \bar\sigma\|_\infty$ 
into Proposition~\ref{prop:EnvToModel}, and then 
substituting the resulting bound on 
$\varepsilon^{(t)}_{i,\mu}$ into the recursion from 
Proposition~\ref{prop:QvalueStability}:
\begin{align*}
&\varepsilon_{i,Q}^{(t+1)}\\
&\le \frac{\varepsilon^{(t)}_{i,\mu} 
\lvert \mathcal{S}_i\rvert G_i}{1-\delta_i} 
+ \delta_i \varepsilon_{i,Q}^{(t)}\\
 \\
&\le \frac{(1+p_i^{\max})\kappa\,|\mathcal{W}|\,
|\mathcal{Z}_{-i}|\,|\mathcal{X}_{-i}|\,
\hat{\mathcal{A}}\,\lambda\,|\mathcal{S}_i|G_i}
{m_i(1-\delta_i)}
\cdot\|\sigma^{(t)}-\bar\sigma^{(t)}\|_\infty\\
&\quad+ \delta_i \varepsilon_{i,Q}^{(t)}\\
&\le \frac{(1+p_i^{\max})\kappa\,|\mathcal{W}|\,
|\mathcal{Z}_{-i}|\,|\mathcal{X}_{-i}|\,
\hat{\mathcal{A}}\,\lambda\,|\mathcal{S}_i|G_i}
{m_i(1-\delta_i)}\\
&\quad\cdot\max_i\frac{\sqrt{|\mathcal{A}_i|}}{\tau_i}
\cdot\varepsilon_{Q}^{(t)}
+ \delta_i \varepsilon_{i,Q}^{(t)}
\end{align*}
Hence, by taking $\max_i$ on both sides
   \[\max_i \varepsilon_{i,Q}^{(t+1)}=\varepsilon_{Q}^{(t+1)}\le \rho\,\varepsilon_{Q}^{(t)}\]

where

\begin{align*}
\rho &:= \max_i\frac{(1+p_i^{\max})\kappa\,|\mathcal{W}|\,
|\mathcal{Z}_{-i}|\,|\mathcal{X}_{-i}|\,
\hat{\mathcal{A}}|\mathcal{S}_i|G_i}{m_i(1-\delta_i)}\\
&\quad\cdot\max_i \frac{\sqrt{|\mathcal{A}_i|}}{\tau_i}\lambda
+\max_i\delta_i.
\end{align*}
Thus, the joint Q-value iteration is a contraction 
under the max-norm with factor $\rho < 1$ 
for sufficiently small $\lambda$, completing the proof.

\end{proofOf}

\end{document}